\definecolor{light-gray}{gray}{0.93}
\begin{document}
\clubpenalty=10000 
\widowpenalty = 10000

\title{Reallocation Problems in Scheduling}
\numberofauthors{5}

\author{
\alignauthor Michael A.~Bender
 \affaddr{Computer Science,}\\
 \affaddr{\mbox{Stony Brook University}}\\
 \affaddr{\mbox{\ and Tokutek, Inc.}}\\
 \affaddr{USA}\\
\and
\alignauthor Martin Farach-Colton\\ 
 \affaddr{{Computer Science,}}\\
 \affaddr{\mbox{Rutgers University}}\\
 \affaddr{\mbox{\ and Tokutek, Inc.}}\\
 \affaddr{USA}\\
\and
\alignauthor S\'andor P. Fekete\\
 \affaddr{Computer Science}\\
 \affaddr{TU Braunschweig}\\
 \affaddr{\ }\\
\affaddr{Germany}\\
\and
\alignauthor Jeremy T. Fineman\\ 
  \affaddr{Computer Science}\\
 \affaddr{Georgetown University}\\
\affaddr{USA}\\
\and
\alignauthor Seth Gilbert\\
  \affaddr{Computer Science}\\
 \affaddr{\mbox{National University of Singapore}}\\
 \affaddr{Singapore}\\
}

\maketitle
\footnotenonumber{This research was supported in part by NSF grants IIS 1247726,  
IIS 1247750, 
CCF 1114930, 
CCF 1217708, 
CCF 1114809, 
CCF 0937822,  
CCF 1218188, and  
by Singapore NUS FRC R-252-000-443-133. 
}

\terms{Algorithms, optimization.}
\keywords{Scheduling, online problems, reallocation.}

\begin{abstract} 
In traditional on-line problems, such as scheduling,
requests arrive over time, demanding available resources.  As each request
arrives, some resources may have to be irrevocably committed to servicing that
request.  In many situations, however, it may be possible or even
necessary to \emph{reallocate}
previously allocated resources in order to satisfy a new request.  This
reallocation has a cost.  This paper shows how to service the requests
while minimizing the reallocation cost. 

We focus on the classic problem of scheduling jobs on a multiprocessor
system.  Each unit-size job has a time window in which it can be
executed.  Jobs are dynamically added and removed from the system. We
provide an algorithm that maintains a valid schedule, as long as a
sufficiently feasible schedule exists.  The algorithm reschedules only
$O(\min\{\log^*{n}, \log^*{\Delta}\})$ jobs for each job that is
inserted or deleted from the system, where $n$ is the number of active
jobs and $\Delta$ is the size of the largest window.

\end{abstract}

\section{Introduction}
\label{intro}

Imagine you are running a doctor's office.  Every day, patients call
and try to schedule an appointment, specifying a time period in which
they are free.  You respond by agreeing to a specific appointment
time.  Sometimes, however, there is no available slot during the
period of time specified by the patient.  What should you do?  You
might simply turn the
patient away.  Or, you can reschedule some of your existing
patients, making room in the schedule.\footnote{Before you get too
  skeptical about the motivation, this is exactly what M.~F-C's
  ophthalmologist does.}  Unfortunately, patients do not
like being rescheduled.  How do you minimize the number of patients
whose appointments are rescheduled?  

While scheduling a doctor's office may (or may not) seem a somewhat
contrived motivating example, this situation arises with
frequency in real-world applications.  Almost any scenario that
involves creating a schedule also requires the flexibility to later
change that schedule, and those changes often have real costs 
(measured in equipment, computation, or tempers).  For example, in
the computational 
world, scheduling jobs on multiprocess machines and scheduling
computation on the cloud lead to rescheduling.  In the
physical world, these problems arise with depressing regularity in
scheduling airports and train stations.
Real schedules are always changing.

In a tightly packed schedule, it can be difficult to
perform this rescheduling efficiently.  Each task you reschedule
risks triggering a cascade of other reschedulings, leading to  high
costs (and unhappy patients).  It is easy to construct an example where
each job added or removed changes $\Omega(n)$ other
jobs, even with constant-sized tasks.  In this paper, we show
that if there is slack in the 
schedule, then these rescheduling cascades can be collapsed, in fact
down to $O(\log^*n)$ for unit-size jobs.  

\subsection*{Reallocation Problems} 
We introduce a framework for
studying the familiar topic of how to change resource allocations as
problem instances change, with a goal of unifying results of this
type, e.g.,~\cite{HallPo04,UnalUzKi97,SandersSiSk09}. We call problems
in this framework \defn{reallocation problems}.  A reallocation
problem is online in the sense that requests arrive and the system
responds. Unlike in the standard online setting where
resources are irrevocably assigned, in a reallocation problem,
allocations may change.  These reallocations, however, have a cost.

Reallocation lies somewhere between traditional notions of offline and
online resource allocation.  If the reallocation cost is $0$, then
there is no penalty for producing an optimal allocation after each
request. In this case, a reallocation problem can be viewed as a
sequence of offline problems. If the cost of reallocation is $\infty$,
then no finite-cost reallocation is possible and the result is a traditional
online problem.  When there is a bounded but non-zero cost for
reallocation, then there is a trade-off between the quality of an
allocation and the cost of reallocation.

Many related questions have been asked in the scheduling community
(explored more fully below), including: how can one design schedules
that are robust to uncertain or noisy inputs (see,
e.g.,~\cite{KouvelisYu97,MulveyVaZe95}); how can one generate
schedules that change in a limited way while still remaining close to
optimal~\cite{Tovey86}; what is the computational cost of finding a
new optimal schedule as the inputs change
(e.g.,~\cite{ArchettiBeSp10,AusielloEsMo09,ArchettiBeSp03,BockenhauerFoHr06}).
Our approach differs in that it is job-centered, meaning that we
measure the cost of moving jobs rather than the cost of computing
where jobs should move to.

Reallocation is a natural problem.  Many existing algorithms, when
looked in the right way, can be viewed as reallocation problems, e.g.,
reconfiguring FPGAs~\cite{FeketeKaSc12}, maintaining a sparse
array~\cite{ItaiKoRo81,Willard82,Willard86,Willard92,BenderHu07}, or maintaining
an on-line topological ordering
(e.g.,~\cite{KatrielBo05,HaeuplerKaMa08,BenderFiGi09}).  We believe
that the framework developed in this paper will allow us to achieve
new insights into classical scheduling and optimization problems and
the cost of changing a good solution when circumstances change.

\subsection*{Our Problem}  
We focus on the reallocation version of a classical multiprocessor
scheduling problem~\cite{Jackson55} (described more fully in
\secref{model}).  We are given a set of unit-length jobs to process on
$m$ machines.  Each job has an arrival time and a deadline.  The job
must be assigned to a machine and processed at some point within the
specified time window.  Jobs are added and removed from the schedule
dynamically.  The goal is to maintain a feasible schedule at all
times.

In order to process a request, it may be necessary to reschedule some
previously scheduled jobs.  There are two ways in which a job may be rescheduled:
it may be \defn{reallocated} to another time on the same machine, or
it may be \defn{migrated} to a different machine.  The \defn{migration
  cost} is the total number of jobs that are moved to different 
machines when new jobs are added or removed.  
The \defn{reallocation cost} is
the total number of jobs that are rescheduled, regardless of 
whether they are migrated or retained on the same machine. 
Our goal is to minimize
both the migration cost and the total reallocation cost. 
We bound these costs separately, since we expect that a 
reallocation might be more expensive if it also entails a migration. 
(See~\cite{BecchettiLeMu04,AwerbuchAzLe02} 
for other work that considers migrations separately 
from other scheduling considerations, such as 
pre\"emptions.)

We call an algorithm that processes such a sequence of scheduling
requests a \defn{reallocating scheduler}.  We show in 
\secref{why} that  a reallocating scheduler must allow for some job
migrations and that  there is no efficient reallocating scheduler
without some form of resource augmentation; here we consider 
speed augmentation~\cite{KalyanasundaramPr95, PhillipsStTo02}.
We say that an instance is
\defn{$\gamma$-underallocated} if it is feasible even when all jobs
sizes (processing times) are multiplied by $\gamma$. In other words,
the offline scheduler is $\gamma$ times slower than the online
scheduler.

\subsection*{Results}
This paper gives an efficient $m$-machine reallocating scheduler for
unit-sized jobs with arrival times and deadlines.  Informally, the
paper shows that as long as there is sufficient slack 
(independent of $m$) in the requested
schedule, then every request is fulfilled, the reallocation cost is
small, and at most one job migrates across machines on each request.
Specifically, this paper establishes the following theorem:

\begin{theorem}
  There exists a constant $\gamma$ as well as a reallocating scheduler 
  for unit-length jobs such
  that for any $m$-machine $\gamma$-underallocated sequence of
  scheduling requests, we achieve the following performance.  Let
  $n_i$ denote the number of jobs in the schedule and $\Delta_i$ the
  largest window size when the $i$th
  reallocation takes place.  Then the $i$th reallocation
\begin{closeitemize}
\item  has cost $O(\min\left\{\log^* n_{i},\log^* \Delta_{i}\right\})$, and 
\item requires at most one machine migration. 
\end{closeitemize}
\thmlabel{unit-size}
\end{theorem}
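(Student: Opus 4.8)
\smallskip
\noindent\textbf{Proof plan.} The strategy is to peel off the machine dimension, reduce to a purely temporal ``capacity-$m$'' scheduling problem, and then extract the $\log^*$ bound from a recursive decomposition of the timeline into a hierarchy of blocks whose sizes form a tower of exponentials. For the machine reduction: since jobs are unit length and slots are interchangeable between machines, a legal schedule is equivalent to an assignment of each job $j$ to an integer start slot $s_j\in[r_j,d_j)$ with at most $m$ jobs per slot, the machine labels being recoverable greedily. We maintain the \emph{canonical} invariant that slot $t$ uses exactly machines $1,\dots,k_t$, where $k_t\le m$ is its current load. Then adding or removing a single job from one slot costs at most one migration (slide the top job down into the hole), so the whole operation costs one migration \emph{provided} the temporal reallocation is arranged as a chain in which each displaced job moves to an adjacent, strictly less loaded slot. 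It therefore suffices to build a scheduler that, per request, produces such a monotone chain of only $O(\log^*\min\{n,\Delta\})$ slot changes.

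\emph{The hierarchy.} We recursively partition the timeline into blocks, where a level-$(i{+}1)$ block is a fixed number of consecutive level-$i$ blocks, with sizes $s_0<s_1<\cdots<s_L$ chosen so that $s_{i+1}=\Theta(2^{s_i})$ (equivalently $s_i=\Theta(\log s_{i+1})$), $s_0=O(1)$, and $s_L=\Theta(\Delta)$; thus $L=O(\log^*\Delta)$, and since the number of jobs associated with a block is at most $n$ and also drops by (roughly) a logarithm per level, the recursion reaches an $O(1)$-size base case within $\min\{\log^*\Delta,\log^* n\}$ levels. Each block is treated as a super-slot whose residual capacity is what remains after the jobs \emph{local} to it (window contained in the block) are set aside; recursing on a level-$i$ block's sub-blocks as slots and its non-local jobs as items yields a sub-instance of exactly the same form one level down. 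At level $i$ we reserve an $\epsilon_i=\epsilon\,2^{-i}$ fraction of every block as dedicated slack; since $\sum_i\epsilon_i\le 2\epsilon$, a single universal constant $\gamma$ (any $\gamma>1+1/(2\epsilon)$ works) makes every sub-instance at every level feasible with room to spare inside \emph{every} contiguous range of blocks, by $\gamma$-underallocation.

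\emph{Servicing a request.} To insert a job we descend the hierarchy: at each level we pick a block inside the job's (progressively coarser) window that still has residual capacity --- one exists, by the slack invariant --- until the job lands in an innermost block with a free slot; since that block has reserved slack, at most $O(1)$ of its incumbent jobs need to be shuffled to restore its local quota. A shuffle may push some block one unit over its quota; we repair this by sliding the single job straddling a boundary into the adjacent block, which may over-fill that block in turn, so the repair propagates --- but only \emph{within} the current level, and the slack held one level up lets it be absorbed after redistributing $O(1)$ jobs and promoting at most one straddling job to the next coarser level, where the same $O(1)$ repair recurs. Summing $O(1)$ reallocations over the $\min\{\log^*\Delta,\log^* n\}$ levels gives the cost bound; deletion is symmetric. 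Because every repair moves jobs only into adjacent, strictly less loaded slots, the overall chain of slot changes is monotone in the sense required above, so reinstating the canonical machine labeling costs exactly one migration.

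\emph{The main obstacle.} The delicate step is the third one: showing that a single over-fill at level $i$ is genuinely repaired by touching only $O(1)$ jobs and promoting at most one request upward, rather than rippling sideways along the level. This forces a careful choice of where block boundaries fall and how the reserved slack is apportioned, so that over-fills always ``escape upward'' instead of spreading; and it must be done while keeping the induced slot-change chain monotone, so that the one-migration bound survives. The heart of the matter is ensuring that the request promoted from level $i$ to level $i{+}1$ can itself be re-placed without restarting a long cascade --- i.e.\ that the whole process is a bounded-\emph{depth} recursion, not a wide cascade --- after which the reduction, the feasibility bookkeeping, and the $s_0=O(1)$ base case are routine.
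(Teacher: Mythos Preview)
Your hierarchical tower-of-exponentials decomposition is the same core idea the paper uses (its levels satisfy $L_{\ell+1}=2^{L_\ell/4}$, giving $O(\log^*)$ levels with $O(1)$ reallocations per level), so the temporal skeleton is right. But the machine reduction you propose is different from the paper's and, as stated, does not deliver the one-migration bound.

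You maintain the invariant that slot $t$ uses machines $1,\dots,k_t$ and argue that a displacement chain costs one migration provided each displaced job moves to a strictly less-loaded slot. This does not follow. In a chain $s_1\to s_2\to\cdots\to s_L$, each intermediate job $j_i$ physically moves from slot $s_i$ to slot $s_{i+1}$; to avoid a migration $j_i$ must land on the \emph{same} machine $p_i$ it came from, but under your canonical labeling the only available machine at $s_{i+1}$ is the one vacated by $j_{i+1}$. So you would need $p_1=p_2=\cdots=p_{L-1}$, which nothing in your scheduler enforces. Monotonicity of loads does not help: a job on machine $5$ moving to a slot of load $2$ is a migration regardless. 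Worse, you then impose this ``monotone chain'' as a constraint the temporal scheduler must respect, entangling the two reductions and making the already-delicate $O(1)$-per-level argument harder.

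The paper decouples the two issues entirely. It assigns jobs to machines \emph{by window}, round-robin: the $n_W$ jobs with window exactly $W$ are spread so each machine holds $\lfloor n_W/m\rfloor$ or $\lceil n_W/m\rceil$ of them. An insert goes to machine $(n_W+1)\bmod m$; a delete from machine $i$ is patched by migrating one job from machine $n_W\bmod m$ to machine $i$. All subsequent temporal reshuffling happens \emph{within} a single machine's schedule, so it never causes a migration no matter how many jobs move. Feasibility per machine is then a short counting argument (the ceilings add at most one job per nested window, and there are few nested windows by underallocation). This is both simpler and removes any constraint on the shape of the temporal scheduler's moves.

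On your acknowledged ``main obstacle'' (why an overfill at level $\ell$ is repaired with $O(1)$ moves rather than rippling sideways): the paper's mechanism is a \emph{reservation} system. Each window $W$ containing $x$ jobs holds $2x+2^k$ reservations spread round-robin over the $2^k$ level-$\ell$ intervals inside it; intervals fulfill reservations shortest-window-first. A counting argument (your slack budget, essentially) shows that fewer than half the intervals can waitlist any of $W$'s requests, so $W$ always has at least $x+1$ fulfilled slots, guaranteeing a free slot for any new job with window $W$ without lateral cascades. Your sketch gestures at this with the $\epsilon_i$ slack reserves but does not supply the combinatorial gadget that actually prevents the sideways ripple.
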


We prove \thmref{unit-size} in stages. 
In \secreftwo{parallel}{aligned}, we assume that job windows are
all nicely ``aligned,'' 
by which we mean that all job windows are either disjoint, 
or else one is completely contained in the other. 
In \secref{parallel}, we show that the
multi-machine aligned case can be reduced to the single-machine
aligned case, sacrificing a constant-factor in the underallocation.
In \secref{aligned}, we establish \thmref{unit-size}, assuming the windows are aligned and that $m=1$.
Finally, in \secref{unaligned}, we remove the alignment
assumption from \secref{aligned}, again sacrificing a
constant-factor in the underallocation.

The crux of our new approach to scheduling appears in
\secref{aligned}.  This section gives a simple scheduling policy that
is robust to changes in the scheduling instances.  By contrast, most
classical scheduling algorithms are brittle, where small changes to a
scheduling instance can lead to a cascade of job reallocations even
when the system is highly underallocated.  This brittleness is
certainly inherent to earliest-deadline-first (EDF) and
least-laxity-first (LLF) scheduling policies, the classical greedy
algorithms for scheduling with arrival times and deadlines.  In fact,
we originally expected that any greedy approach would necessarily be
fragile.  We show that this is not the case.

Our new scheduler is based upon a simple greedy policy
(``reser\-vation-based pecking-order scheduling'').  Unlike most robust
algorithms, which explicitly engineer redundancy, the resiliency of
our scheduler derives from a basic combinatorial property of the
underlying ``reservation'' system.   In this sense, it feels different
from typical mechanisms for achieving robustness in computer science
or operations research.

\subsection*{Related Work}  Here, we flesh out the details of related
scheduling and resource allocation work.

\sloppy

  Robust scheduling (or ``robust planning'') involves designing
  schedules that can tolerate some level of
  uncertainty.  See~\cite{KouvelisYu97,MulveyVaZe95} for surveys
  and~\cite{ChiraphadhanakulBa11,JiangBa09,LanClBa06,CapraraGaKr10}
  for applications to train and airline scheduling.
    The assumption in these papers is that the problem
  is approximately static, but there is some error or
  uncertainty, or that the schedule remains near optimal even if the
  underlying situation changes~\cite{Tovey86}.
  By contrast, we focus on an
  arbitrary, worst-case, sequence of requests that may lead to
  significant changes in the overall allocation of resources.
  
  Researchers have also focused on
  finding a good fall-back plan (``reoptimization'') when a schedule is forced to
  change.
  Given an optimal solution for an input, the goal is to compute a
  near-optimal solution to a closely related
  input~\cite{AusielloBoEs07,AusielloEsMo09,ArchettiBeSp03,BockenhauerFoHr06}.
  These papers typically focus on the computational complexity of
  incremental optimization.
  By contrast, we focus on the cost of changing the schedule.

  Shachnai et al.~\cite{ShachnaiTaTa12} introduced a framework that is
  most closely related to ours. They considered computationally
  intractable problems that admit approximation algorithms.  When the
  problem instance changes, they would like to change the solution as
  little as possible in order to reestablish a desired approximation
  ratio.  One difference between their framework and ours is that we
  measure the ratio of reallocation cost to allocation cost, whereas
  there is no notion of initial cost for them. Rather they measure the
  ratio of the transition cost to the optimal possible transition cost
  that will result in a good solution.  Although their framework is an
  analogous framework for approximation algorithms, the particulars
  end up being quite different.

Davis et al.~\cite{DavisEdIm06} propose a
resource reallocation problem where the allocator must assign resources with
respect to a user-determined set of constraints.  The constraints may
change, but the allocator is only informed when the solution becomes
infeasible.  The goals is to minimize communication between the
allocator and the users.  

Many other papers in the literature work within similar setting of
job reallocations, but with different goals, restrictions, or
scheduling problems in mind.  Unal et al.~\cite{UnalUzKi97} study a
problem wherein an initial feasible schedule consisting of jobs with
deadlines must be augmented to include a set of newly added jobs,
minimizing some objective function on only the new jobs without
violating any deadline constraints on the initial schedule.  As in the
present paper they observe that slackness in the original schedule
facilitates a more robust schedule, but outside of the hard
constraints they do not count the reallocation cost.  Hall and
Potts~\cite{HallPo04} allow a sequence of updates and aim to restrict the
change in the schedule, but they evaluate the quality of their
algorithm incrementally rather than with respect to a full sequence of
updates or an offline objective.  

More closely related to our
setting, Westbrook~\cite{Westbrook00} considers the total cost of
migrating jobs across machines in an online load-balancing problem
while also keeping the maximum machine load competitive with the
current offline optimum, which is a different scheduling problem in a
similar framework.  Unlike in the present paper, Westbrook considers
only migration costs and does not include the reallocation cost of
reordering jobs on machines.  Sanders et al.~\cite{SandersSiSk09}
consider a similar load-balancing problem with migration costs and no
reallocation costs; their goal is to study the tradeoff between
migration costs and the instantaneous competitive ratio.



\section{Reallocation Model}
\seclabel{model}

\sloppy

Formally, an \emph{on-line execution} consists of a sequence of
scheduling requests of the following form: $\ang{\proc{InsertJob},
  \id{name}, \id{arrival}, \id{deadline}}$ and $\ang{\proc{DeleteJob},
  \id{name}}$.  A job $j$ has integral arrival time $a_j$ and deadline
$d_j>a_j$, meaning that it must be scheduled in a timeslot no earlier
than time $a_j$ and no later than time $d_j$.  We call the time
interval $[a_j,d_j]$ the job's \defn{window} $W$.  We call $d_j-a_j$,
denoted by $\card{W}$, the \defn{window $W$'s span}.  We use \defn{job
  $j$'s span} as a shorthand for its window's span.  Each job takes
exactly one unit of time to execute.  

At each step, we say that the \defn{active jobs} are those that have
already been inserted, but have not yet been deleted.  Before each
scheduling request, the scheduler must output a feasible schedule for
all the active jobs.  A feasible schedule is one in which each job is
properly scheduled on a particular machine for a time in the the job's
available window, and no two jobs on the same machine are scheduled
for the same time.  Notice that we are not concerned with actually
\emph{running} the schedule; rather, we construct a sequence of
schedules subject to an on-line sequence of requests.

We define the \defn{migration cost} of a request $r_i$ to be the
number of jobs whose machine changes when $r_i$ is processed.  We
define the \defn{reallocation cost} of a request $r_i$ to be the
number of jobs that must be rescheduled when $r_i$ is processed.

When the scheduling instances do not have enough ``slack'' it may
become impossible to achieve low reallocation costs.  In fact, if
there are $n$ jobs currently scheduled, a new request may have
reallocation cost $\Theta(n)$.  Even worse, it may be that most
reallocations require most jobs to be moved, as is shown in
\lemref{why-underallocation}: for large-enough $s$, there exist
length-$s$ request sequences, in which $\Theta(s^2)$ reallocations are
necessary.  Moreover, for large-enough $s$, there exist length-$s$
request sequences in which $\Theta(s)$ machine migrations are
necessary (see \lemref{why-migrations}).

\subsection*{Underallocated Schedules and Our Result} 

To cope with \lemreftwo{why-migrations}{why-underallocation}, we
consider schedules that contain sufficient slack, i.e., that are not
fully subscribed.  We say that a set of jobs is \defn{$m$-machine
  $\gamma$-underallocated}, for $\gamma \geq 1$, if there is a
feasible schedule for those jobs on $m$ machines even when 
the job length (processing time) is multiplied by $\gamma$.
This is equivalent to giving the offline scheduler a processing speed
that is $\gamma$ times slower than the online scheduler.  When $m$ is
implied by context, we simply say \defn{$\gamma$-underallocated}.

Overloading terminology, we say that a sequence of scheduling requests is
\defn{$\gamma$-underallocated} if after each request the set of active jobs is
$\gamma$-underallocated.

\subsection*{Aligned-Windows Assumption} 

The assumption of aligned windows is used in
\secreftwo{parallel}{aligned}, but it is dropped in \secref{unaligned}
to prove the full theorem.  We say that a window $W$ is \defn{aligned}
if (i) it has span $2^i$, for some integer $i$, and (ii) it has a
starting time that is a multiple of $2^i$.  If a job's window is
aligned, we say that the job is \defn{aligned}.  We say that a set of
windows (or jobs) are \defn{recursively aligned} if every window (or
job) is aligned.

Notice that recursive alignment implies that two jobs windows are
either equal, disjoint, or one is contained in the other (i.e., the
windows are laminar).  Dealing with recursively aligned windows is
convenient in part due to the following observation.

\begin{lemma}\lemlabel{underallocated}
  If a recursively aligned set of jobs is $m$-machine
  $\gamma$-underallocated, then for any aligned window $W$ there are
  at most $m\card{W}/\gamma$ jobs with span at most $\card{W}$ whose
  windows overlap $W$.
\end{lemma}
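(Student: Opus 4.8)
The plan is to prove the bound by a capacity-counting (volume) argument that exploits two facts: the laminar structure coming from recursive alignment, and the concrete meaning of $\gamma$-underallocation. First I would pin down \emph{where} the jobs being counted can sit. Fix an aligned window $W$ and a job $j$ with span at most $\card{W}$ whose window overlaps $W$. Since both $W$ and $j$'s window are aligned, they are laminar, so they are either equal or one contains the other; combined with $j$'s span being at most $\card{W}$ and the two windows overlapping, this forces $j$'s window to be contained in the time interval spanned by $W$. (Jobs whose span exceeds $\card{W}$, or whose window is disjoint from $W$, are simply not among those we count, so this covers every job in the statement.)

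Next I would unpack $m$-machine $\gamma$-underallocation: by definition there is a feasible schedule of all the active jobs on $m$ machines in which every job has processing time $\gamma$ instead of $1$. Fix one such schedule. Each job $j$ counted above is scheduled in this slowed schedule as a half-open interval of length $\gamma$ lying inside $j$'s window, and hence — by the containment established in the first step — inside the length-$\card{W}$ time interval spanned by $W$, on one of the $m$ machines. On any single machine, the intervals assigned to these jobs are pairwise disjoint by feasibility and all lie in an interval of length $\card{W}$, so that machine carries at most $\card{W}/\gamma$ of them. Summing over the $m$ machines gives at most $m\card{W}/\gamma$ such jobs, which is exactly the claimed bound.

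The only place that requires genuine care is the first step: making the implication ``span at most $\card{W}$ and window overlapping $W$'' together with recursive alignment yield ``window contained in $W$'' fully rigorous (the equal-span case, where overlapping aligned windows of the same span must coincide, is worth spelling out), and then noting that a length-$\gamma$ block scheduled within a window contained in $W$ occupies a sub-interval of $W$'s span. After that, the conclusion is a one-line pigeonhole/capacity count, so I do not expect a substantive obstacle here; the lemma is essentially a direct consequence of laminarity plus the definition of underallocation.
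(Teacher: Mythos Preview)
Your proposal is correct and follows essentially the same approach as the paper's proof: first use recursive alignment (laminarity) to show that any overlapping window of span at most $\card{W}$ is in fact contained in $W$, and then do a capacity count using the $\gamma$-underallocated schedule to bound the number of such jobs by $m\card{W}/\gamma$. If anything, you spell out the per-machine volume argument more carefully than the paper does.
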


\begin{proof}
The window $W$ comprises $\card{W}$ timeslots on each
  of $m$ machines, for a total of $m\card{W}$ timeslots.  By
  definition, a $\gamma$-underallocated instance is feasible even if
  the jobs' processing times are increased to $\gamma$.  Thus, there
  may be at most $m\card{W}/\gamma$ jobs restricted to window $W$.
  Since the set of jobs is recursively aligned, if a job has window
  $W'$ that overlaps $W$ and $\card{W'} \leq \card{W}$, then $W'$ is
  fully contained by $W$.  Hence, there can be at most
  $m\card{W}/\gamma$ such jobs.
\end{proof} 


\section{Reallocating Aligned Jobs on Multiple Machines}
\seclabel{parallel} 

This section algorithmically reduces the multiple-machine scheduling
problem to a single-machine scheduling problem, assuming recursive
alignment.  The reduction uses at most one migration per request.   We
use $m$ to denote the number of machines.

The algorithm is as follows.  For every window $W$, record the number
$n_W$ of jobs having window $W$. (This number need only be recorded
for windows that exist in the current instance, so there can be at
most $n$ relevant windows for $n$ jobs.) The goal is to maintain the
invariant that every machine has between $\floor{n_W/m}$ and
$\ceil{n_W/m}$ jobs with window $W$, with the extra jobs being
assigned to the earliest machines.  This invariant can be maintained
simply by delegating jobs, for each window $W$, round-robin: if there
are $n_W$ jobs with window $W$, a new job with window $W$ is delegated
to machine $(n_W + 1) \bmod m$.  When a job with window $W$ is deleted
from some machine $m_i$, then a job is removed from machine $(n_w \bmod
m)$ and migrated to machine $m_i$.  All job movements are performed
via delegation to the single-machine scheduler on the specified machine(s).

The remaining question is whether the instances assigned to each
machine are feasible.  The following lemma says that they are.

\begin{lemma}
\lemlabel{parallel}
  Consider any $m$-machine $6\gamma$-underallocated recursively
  aligned set of jobs $J$, where $\gamma$ is an integer.  Consider a
  subset of jobs $J'$ such that if $J$ contains $n_W$ jobs of window
  $W$, then $J'$ contains at most $\ceil{n_W/m}$ jobs of window $W$.  Then
  $J'$ is 1-machine $\gamma$-underallocated.
\end{lemma}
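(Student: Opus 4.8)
The plan is to certify directly that $J'$ is $1$-machine $\gamma$-underallocated, i.e.\ that $J'$ is feasibly schedulable on one machine when every job has length $\gamma$, by a Hall-type interval count that is then controlled using \lemref{underallocated}. Recall the standard feasibility criterion for equal-length jobs with release times and deadlines on a single machine: a set of length-$\gamma$ jobs is feasible iff for every time interval $I$ the number of jobs whose window is contained in $I$ is at most $\card{I}/\gamma$ (one direction is a volume bound; the other is a routine exchange/EDF argument, and is especially clean here since $\gamma$ is integral). Since $J'\subseteq J$ and $J$ is recursively aligned, all job windows are laminar, so it suffices to check this criterion on aligned windows: for a general interval $I$, partition the jobs with window in $I$ according to which maximal aligned window inside $I$ contains them; those maximal aligned windows are pairwise disjoint, so summing the per-window bounds yields the bound for $I$. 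Thus it remains to prove, for every aligned window $W$, that $J'$ has at most $\card{W}/\gamma$ jobs with window contained in $W$.

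The key preliminary step is a lower bound on window spans. Because $J$ is $6\gamma$-underallocated, a length-$6\gamma$ copy of every job must fit inside its own window, so every job window has span at least $6\gamma$; being aligned, every job window has span a power of two at least $\beta$, where $\beta$ is the smallest power of two with $\beta\ge 6\gamma$. Now fix an aligned window $W$ and let $S$ be the set of aligned windows $W'\subseteq W$ with $n_{W'}\ge 1$. Every member of $S$ has span between $\beta$ and $\card{W}$, and there are $\card{W}/2^k$ aligned windows of span $2^k$ inside $W$; summing the geometric series over $\beta\le 2^k\le\card{W}$ gives $\card{S}< 2\card{W}/\beta\le \card{W}/(3\gamma)$.

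To combine: by hypothesis $J'$ has at most $\ceil{n_{W'}/m}$ jobs of window $W'$ for each $W'$, and only windows in $S$ contribute, so the number of $J'$-jobs with window contained in $W$ is at most $\sum_{W'\in S}\ceil{n_{W'}/m}\le \frac1m\sum_{W'\in S} n_{W'}+\card{S}$. Applying \lemref{underallocated} to $J$ with underallocation factor $6\gamma$ (and using laminarity to identify ``span at most $\card{W}$ and overlapping $W$'' with ``contained in $W$'') gives $\sum_{W'\in S} n_{W'}\le m\card{W}/(6\gamma)$, so the first term is at most $\card{W}/(6\gamma)$, while the second is less than $\card{W}/(3\gamma)$ by the previous paragraph. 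Hence the count is less than $\card{W}/(2\gamma)\le\card{W}/\gamma$, establishing the criterion and completing the proof.

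The delicate point—and the only real obstacle—is the bound on $\card{S}$. The ceilings $\ceil{n_{W'}/m}$ can each inflate the bound by a full additive $1$, and $W$ contains up to $\Theta(\card{W})$ distinct sub-windows, so without a lower bound on window spans this estimate would fail badly (for instance when $m$ is large and jobs are spread over many tiny windows). Underallocation of $J$ is exactly what forces every window to be large, capping $\card{S}$ through a geometric series; the remainder is bookkeeping, and the slack between $\card{W}/(2\gamma)$ and $\card{W}/\gamma$ shows the constant $6$ is not tight.
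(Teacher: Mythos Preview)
Your proof is correct and follows essentially the same approach as the paper. Both arguments observe that $6\gamma$-underallocation forces every window span to be at least $6\gamma$, bound the number of aligned subwindows of $W$ by the geometric series $\sum_{2^k\ge 6\gamma}\card{W}/2^k\le 2\card{W}/(6\gamma)$, split the $J'$ count into $(\sum n_{W'})/m$ plus one extra per window from the ceilings, and arrive at the same total $\card{W}/(6\gamma)+\card{W}/(3\gamma)=\card{W}/(2\gamma)$; you are simply more explicit about the Hall-type feasibility criterion and the reduction to aligned windows, which the paper dispatches with ``a simple inductive argument.''
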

\begin{proof}
  Since $J$ is underallocated, \lemref{underallocated} says that there
  can be at most $m\card{W}/(6\gamma)$ jobs with window $W$
  or nested inside $W$. By definition, no window smaller than
  $6\gamma$ contains any jobs.  The worry is that the ceilings add too
  many jobs to one machine.  But there are at most
  $2\card{W}/(6\gamma)$ windows nested inside $W$, and the ceilings add
  at most 1 job to each of these windows.   So the total number of
  jobs in $J'$ with windows inside $W$ is at most $\card{W}/(6\gamma)
  + 2\card{W}/(6\gamma) = \card{W}/(2\gamma)$.  Even if all jobs are
  restricted to run at multiples of $\gamma$, a simple inductive
  argument shows that this many size-$\gamma$ jobs can be feasibly scheduled.
\end{proof}



\section{Reallocating Aligned Jobs on One Machine}
\seclabel{aligned}

We now give a single machine, reallocating scheduler for unit-sized
jobs.  We assume a bound $n$ on the number of jobs concurrently
scheduled in the system, and relax this assumption at the end of the
section.

\subsection*{Na\"{\i}ve  Pecking-Order Scheduling is\\ Logarithmic}

We first give the na\"{\i}ve solution, which requires a logarithmic
number of reallocations per job insert/delete.
This solution uses what we call \defn{pecking-order scheduling}, which
means that a job $k$ schedules itself without regard for jobs with
longer span and with complete deference to jobs with shorter span.  A
job $k$ with window $W$ may get displaced by a job $j$ with a shorter
window (nested inside $W$), and $k$ may subsequently displace a job
$\ell$ with longer window.\footnote{At first glance, \lemref{naive}
  seems to contradict the underallocation requirement given in
  \lemref{why-underallocation}.  That lower bound, however, applies to
  the general case, whereas this lemma applies to the aligned case.}

\begin{lemma}
  Let $n$ denote the maximum number of jobs in any schedule and let
  $\Delta$ denote the longest window span.  There exists a greedy
  reallocating scheduler such that for every feasible sequence of
  recursively aligned scheduling requests, the reallocation cost of
  each insert/delete is $O(\min\left\{\log n,\log \Delta\right\})$.
  \lemlabel{naive}
\end{lemma}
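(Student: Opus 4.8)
The plan is to exhibit a simple \emph{pecking-order} scheduler and bound the length of the reallocation cascade it triggers. On $\proc{DeleteJob}(j)$ the scheduler simply removes $j$, at cost $0$; this preserves feasibility of the maintained schedule, which is all the insertion analysis below will use. On $\proc{InsertJob}(j)$ the scheduler runs the obvious cascade. Put $\ell_0 := j$. To place a job $\ell_t$: if its window $W_{\ell_t}$ has a free slot, put $\ell_t$ there and stop; otherwise every slot of $W_{\ell_t}$ is occupied, so pick some occupant $\ell_{t+1}$ whose span exceeds $\card{W_{\ell_t}}$, move $\ell_t$ into that slot, and recurse to re-place $\ell_{t+1}$. (This is pecking order: $\ell_t$ only ever displaces a job of strictly larger span.) If the cascade performs $L$ evictions, then at most $L+1$ jobs change their assignment (the inserted job and the $L$ displaced ones), so the reallocation cost is $L+1$ and it suffices to prove $L = O(\min\{\log n,\log\Delta\})$; the case $L=0$ is trivial.

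First I would fix the shape of the cascade. Write $W^{(t)} := W_{\ell_t}$. Since $\ell_{t+1}$ occupies a slot of $W^{(t)}$ and has span strictly larger than $\card{W^{(t)}}$, the laminarity of aligned windows (noted just before \lemref{underallocated}) forces $W^{(t)} \subset W^{(t+1)}$ properly, and since aligned spans are powers of two, $\card{W^{(t+1)}} \geq 2\card{W^{(t)}}$. Hence $W^{(0)} \subset W^{(1)} \subset \cdots$ is a strictly growing laminar chain with $\card{W^{(L)}} \geq 2^{L}\card{W^{(0)}} \geq 2^{L}$; as each window in the chain is the window of some active job, $\card{W^{(L)}} \leq \Delta$, so $L \leq \log_2\Delta$. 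I also have to check the cascade never gets stuck, i.e., that whenever $W^{(t)}$ is full there really is an occupant of span $>\card{W^{(t)}}$ to evict. If instead every occupant of $W^{(t)}$ had window contained in $W^{(t)}$, then those $\card{W^{(t)}}$ distinct active jobs together with $\ell_t$ (whose window is exactly $W^{(t)}$) would be $\card{W^{(t)}}+1$ active jobs all confined to the $\card{W^{(t)}}$ slots of $W^{(t)}$, contradicting feasibility of the current active set. So the cascade always terminates, within $\log_2\Delta$ steps, by actually placing its last job in a free slot.

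For the $\log n$ bound I would use that for every $t<L$ the window $W^{(t)}$ is completely full when $\ell_t$ is placed into it --- that is precisely the situation in which the cascade proceeds past step $t$. In particular (for $L\geq 1$) the window $W^{(L-1)}$ is full, so all $\card{W^{(L-1)}}$ of its slots hold jobs, i.e., there are at least $\card{W^{(L-1)}}\geq 2^{L-1}$ distinct active jobs; hence $n\geq 2^{L-1}$, that is $L\leq 1+\log_2 n$. Combining the two estimates gives $L\leq \min\{\log_2\Delta,\,1+\log_2 n\}$, so the reallocation cost of an insertion is $L+1 = O(\min\{\log n,\log\Delta\})$, while deletions cost $0$.

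The routine parts are that each cascade step preserves feasibility (exactly one job is momentarily unplaced, the others stay legal, and the last step drops it into a free slot) and the termination bookkeeping. The step that carries the weight --- and the natural place for an error --- is the ``cannot get stuck'' claim: it is exactly what lets plain feasibility stand in for the underallocation hypothesis that \lemref{why-underallocation} shows is unavoidable in the general case, which is why the aligned case gets a logarithmic cascade bound with no slack at all.
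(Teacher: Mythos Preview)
Your argument is correct and follows essentially the same pecking-order cascade as the paper's proof: displace a strictly larger-span occupant, use laminarity of aligned windows to get a geometrically growing chain, and bound its length by $\log\Delta$ and by $\log n$ via a full-window count. You are simply more explicit than the paper about the ``never stuck'' step and about extracting the $\log n$ bound from the last full window.
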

\begin{proof}
  To insert a job $j$ with span $2^i$, find any empty slot in $j$'s
  window, and place $j$ there.  Otherwise, select any job $k$
  currently scheduled in $j$'s window that has span $\geq 2^{i+1}$.
  If no such $k$ exists, the instance is not feasible (as every job
  currently scheduled in $j$'s window \emph{must} be scheduled in
  $j$'s window).  If such a $k$ exists, replace $k$ with $j$ and
  recursively insert $k$.  This strategy causes cascading reallocations
  through increasing window spans, reallocating at most one job with
  each span.  Since there are at most $\log\Delta$ distinct window
  spans in the aligned case, and moreover all jobs can fit within a
  window of span $n$, the number of cascading reallocations is
  $O(\min\left\{\log n, \log\Delta\right\})$. 
\end{proof}

\subsection*{Pecking-Order Reallocation via Reservations Costs
\boldmath {\large $O(\min \{\log^* n, \log^* \Delta\})$}}

We now give a more efficient reallocating scheduler, which matches
\thmref{unit-size} when the scheduling requests are recursively
aligned.  The algorithm is summarized for job insertions in
\figref{algorithm}.  

The intuition behind reservation scheduling manifests itself in the
process of securing a reservation at a popular restaurant.  If
higher-priority diners already have reservations, then our reservation
is waitlisted. Even if our reservation is ``confirmed,'' a celebrity
(or the President, for DC residents) may drop in at the last moment
and steal our slot.  If the restaurant is empty, or full of
low-priority people like graduate students, then our reservation is
fulfilled.  The trick to booking a reservation at a competitive
restaurant is to make several reservations in parallel.  If multiple
restaurants grant the reservation, we can select one to eat at.  If a
late arrival steals our slot, no problem, we have another reservation
waiting.

Back to our scheduling problem, by spreading out reservations
carefully, jobs will only interfere if they have drastically different
spans.  Our algorithm handles jobs with ``long'' windows and ``short''
windows separately, and only a ``short'' job can displace a long job.
The scheduler itself is recursive, so ``very short'' jobs can displace
``short'' jobs which can displace ``long'' jobs, but the number of
levels of recursion here will be $\log^*\Delta$, as opposed to $\log
\Delta$ in the naive solution.

There are two components to the scheduler. The first component uses
reservations to guarantee that jobs cannot displace (many) other jobs
having ``similar'' span, so the reallocation cost if all jobs have
similar spans is $O(1)$.  These (over-)reservations, however, consume
timeslots and amplify the underallocation requirements.  Applying the
scheduler recursively at this point is trivial to achieve a good
reallocation cost, but the required underallocation would become
nonconstant. The second component of the scheduler is to combine
levels of granularity so that their effects on underallocation do not
compound.

The remainder of the section is organized as follows.  We first
discuss an interval decomposition to separate jobs into different
``levels'' according to their spans.  Then we present the scheduler
with regards to a single job level. Finally we discuss how to
incorporate multiple levels simultaneously.

\subsubsection*{Interval Decomposition}

Our scheduler operates nearly independently at multiple levels of
granularity.  More precisely, we view these levels from bottom up by
defining the threshold
\[ L_{\ell+1} = 
      \begin{cases}
        2^{5} & \text{if $\ell = 0$} \\
        2^{L_{\ell}/4} & \text{ if $\ell > 0$}  
      \end{cases} \ .
\]
It is not hard to see that $L$ is always a power of 2, growing as a
tower function of $\sqrt[4]{2}$.  It is often convenient to use the
equivalent relationship $L_{\ell} = 4\lg(L_{\ell+1})$---each threshold
is roughly the $\lg$ of the next.

Our scheduler operates recursively according to these thresholds.  The
level-$\ell$ scheduler handles jobs and windows $W$ with span
$L_{\ell} < \card{W} \leq L_{\ell+1}$.  We call a job (or window) a
\defn{level-$\ell$ job (window)} if its span falls in this range.

We partition level-$\ell$ windows into nonoverlapping, aligned
subwindows called \defn{level-$\ell$ intervals}, consisting of $L_\ell
= 4\lg L_{\ell+1}$ timeslots.  The following observation is useful in
our analysis:
\begin{equation} \left(\text{\# of distinct level-$\ell$-window spans}\right)
  \leq \lg(L_{\ell+1}) = 
L_\ell/4 \label{eqn:windowcount}
\end{equation}

The reallocation scheduler operates recursively within each interval
to handle lower-level jobs. Because this is pecking-order scheduling,
the recursive scheduler makes decisions without paying attention to
the location of the higher-level jobs, guaranteeing only that each
lower-level job is assigned a unique slot within its appropriate
window.  In doing so, it may displace a long job and invoke the
higher-level scheduler.

\sloppy 

\subsubsection*{Schedule Level-{\large $\ell$} Jobs via Reservations}

Consider a level-$\ell$ window $W$ with span $2^k L_\ell$, for some
integer $k \geq 1$ (i.e., $W$ contains $2^k$ level-$\ell$ intervals).
Let $x$ denote the number of jobs having exactly window $W$. 

The window $W$ maintains a set of \defn{reservations} for these $x$
jobs, where each reservation is a \emph{request for a slot in a given
  level-$\ell$ interval}.
A reservation made by $W$ can be \defn{fulfilled}; this means that one
slot from the requested interval is \defn{assigned to $W$}, and the
only level-$\ell$ jobs that may \defn{occupy} that slot are any of the
$x$ jobs with window exactly $W$.
Alternatively, a reservation can be \defn{waitlisted}; this means that
all the slots in the requested interval are already assigned to
smaller windows than $W$.  Which reservations are fulfilled and which
are waitlisted may change over time as jobs get allocated and removed.

We now explain how these reservations are made.  Initially, a
level-$\ell$ window $W$ makes one reservation for each enclosed
level-$\ell$ interval.  It makes two additional reservations for each
job having window $W$. These reservations are spread out round-robin
among the intervals within $W$ (and independently of any jobs with any
different windows).  We maintain the following invariant:

\begin{invariant}
  If there are $x$ jobs having level-$\ell$ window $W$ with
  $\card{W}=2^kL_\ell$, then $W$ has exactly $2x+2^k$ reservations in
  level-$\ell$ intervals.
\begin{itemize}
\item These reservations are assigned in round-robin order to the
  intervals in $W$.
\item Each of the enclosed intervals contains either
  $\floor{2x/2^k}+1$ or $\floor{2x/2^k}+2$ of $W$'s reservations,
  where the leftmost intervals have the most reservations and the
  rightmost intervals have the least reservations.
  \invlabel{reservations}
\end{itemize}
\end{invariant}

To maintain \invref{reservations}, when a new job with window $W$ is
allocated, $W$ makes two new reservations, and these are sent to the
leftmost intervals that have the least number ($\floor{2x/2^k}+1$) of
$W$'s reservations.  When a job having window $W$ is deleted, $W$
removes one reservation each from the two rightmost intervals that
have the most reservations.

We now describe the reservation process from the perspective of the
interval, which handles reservation requests from the $< L_\ell/4$
level-$\ell$ windows that contain the interval (see
Equation~\ref{eqn:windowcount}).  The interval decides whether to
fulfill or waitlist a reservation, prioritizing reservations made by
shorter windows.  Each interval $I$ has an \defn{allowance}
$\id{allowance}(I)$, specifying which slots it may use to fulfill
reservations. In the absence of lower-level jobs, the
$\card{\id{allowance}(I)} = L_{\ell}$, since the interval has span
$L_{\ell}$.  (When lower-level jobs are introduced, however, the
allowance decreases---the allowance contains all those slots that are
not \emph{occupied} by lower-level jobs.)  Thus, the interval sorts the
window reservations with respect to span from shortest to longest, and
fulfills the $\card{\id{allowance}(I)} \leq L_{\ell}$ reservations
that originate from the shortest windows.  A fulfilled reservation is
assigned to a specific slot in the interval, while a waitlisted
reservation has no slot.  The interval maintains a list of these
waitlisted reservations.

The set of fulfilled reservations changes dynamically as
insertions/deletions occur.  When a new reservation is made by window
$W$, a longer window $W'$ may lose a reserved slot as one of its
fulfilled reservations is moved to the waitlist; if there is a job (of
the same level) in that slot, it must be moved.  When a job with
window $W$ is deleted, $W$ has two fewer reservations, and so may lose
two fulfilled slots.  If there is a job in either of these slots, then
that job must be moved.  (In this case, a longer window $W'$ may gain
a fulfilled slot, but this does not require any job movement.)
The following invariant is needed to establish the algorithm's
correctness.

\begin{invariant}
  When a job having window $W$ is newly allocated, $W$ makes two new
  reservations. Then the job is assigned to any empty slot for which
  $W$ has a fulfilled reservation.  There will always be at least one
  such slot (proved by \lemref{reservationspace}).
  \invlabel{slot-always-exists}
\end{invariant}

Interestingly, as a consequence of
pecking-order scheduling combined with round-robin reservations:
\begin{observation}
  Which reservations in which intervals are fulfilled and which are
  waitlisted is history independent.  The actual placement of the jobs
  is not history independent.  \obslabel{history-independent}
\end{observation}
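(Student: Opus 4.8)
The plan is to show that the fulfilled/waitlisted status of every reservation is a deterministic function of the current active-job set, by checking that each quantity the fulfillment rule consults is itself determined by that set. First I would nail down the reservation counts: by \invref{reservations}, a window $W$ of span $2^kL_\ell$ holding $x$ jobs places exactly $2x+2^k$ reservations round-robin over its $2^k$ intervals, so an interval $I$ inside $W$ receives either $\floor{2x/2^k}+1$ or $\floor{2x/2^k}+2$ of them according only to the position of $I$ within $W$ and the value of $x$. Since $x$ is part of the active set, and since $W$'s reservations within a single interval are interchangeable (each is merely ``a request for some slot of $I$''), the multiset of reservations lodged in any interval, broken down by originating window, depends only on the active jobs.

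Next I would pin down each interval's allowance. Because the windows are recursively aligned, a job of span at most $L_\ell$ has a window that is either contained in $I$ or disjoint from $I$; in the former case a feasible schedule must place that job in some slot of $I$, and distinct jobs get distinct slots. Hence the number of slots of $I$ occupied by lower-level jobs equals $q_I$, the number of active jobs whose window is contained in $I$ --- a quantity read directly off the active set. Therefore $\card{\id{allowance}(I)} = L_\ell - q_I$ is history independent, even though the particular free slots are not; this last point is precisely the content of the observation's second sentence.

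Combining these, the interval fulfills reservations greedily from shortest window to longest, and --- again by alignment --- any two distinct windows containing a fixed interval $I$ have distinct spans, so the greedy order is a strict total order with no cross-window ties. Feeding it the history-independent reservation counts and the history-independent allowance size determines, for every window $W$ containing $I$, how many of $W$'s reservations in $I$ are fulfilled; the only residual freedom --- which of $W$'s own reservations in $I$ make the cut --- is immaterial by interchangeability. This establishes the first sentence. For the second, a short scenario suffices: a level-$\ell$ job $k$ sits in a fulfilled slot $s$; a shorter job $j$ arrives, claims $s$ by pecking order, and forces $k$ into a different fulfilled slot $s'$; then $j$ is deleted. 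The active set is now exactly what it would have been had $j$ never appeared, yet $k$ occupies $s'$ rather than $s$.

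The step I expect to require the most care is the allowance computation: one must check that ``lower-level jobs'' are exactly the jobs of span at most $L_\ell$ --- in particular that a span-$L_\ell$ job is \emph{not} level-$\ell$ and hence is correctly counted among the occupants of its interval --- and that the (assumed) feasibility of the recursive schedule really does force every job with window contained in $I$ to consume a slot of $I$, so that $L_\ell - q_I$ is the exact allowance size rather than merely an upper bound. Everything downstream of that is a direct substitution into the greedy fulfillment rule.
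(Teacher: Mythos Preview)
Your proposal is correct and considerably more detailed than anything the paper supplies. In the paper this observation is stated without proof; it is merely introduced as ``a consequence of pecking-order scheduling combined with round-robin reservations.'' So there is no approach to compare against---you have written out the justification that the authors left implicit.

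The crux of your argument is the right one: the fulfillment rule at a level-$\ell$ interval $I$ consults only (i) the multiset of reservations lodged at $I$, indexed by originating window, and (ii) the cardinality $\card{\id{allowance}(I)}$, not the identity of the slots in the allowance. You correctly derive (i) from \invref{reservations} and (ii) from recursive alignment plus feasibility, and you correctly note that distinct aligned windows containing $I$ must have distinct spans, so the greedy priority is a strict total order across windows. Your point that reservations from the same window to the same interval are interchangeable handles the only remaining ambiguity. The counterexample for the second sentence is also fine: a same-level displacement via \proc{Reserve}/\proc{Move} followed by deletion of the displacing job leaves the active set unchanged but the displaced job in a different slot, and the paper explicitly notes that regaining a fulfilled slot on deletion ``does not require any job movement,'' so the displaced job does not move back.

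Your flagged concern about the allowance computation is well placed but resolves the way you anticipate: a job is level-$\ell$ exactly when its span lies in $(L_\ell, L_{\ell+1}]$, so span-$L_\ell$ jobs are strictly below level $\ell$ and are counted in $q_I$; and alignment forces any such job's window either into $I$ or disjoint from $I$, so feasibility pins it to a slot of $I$ and $\card{\id{allowance}(I)} = L_\ell - q_I$ exactly.
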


\setlength{\codeboxwidth}{.8\textwidth} 
\addtolength{\codeboxwidth}{-2\digitwidth}

\begin{figure*}[t]
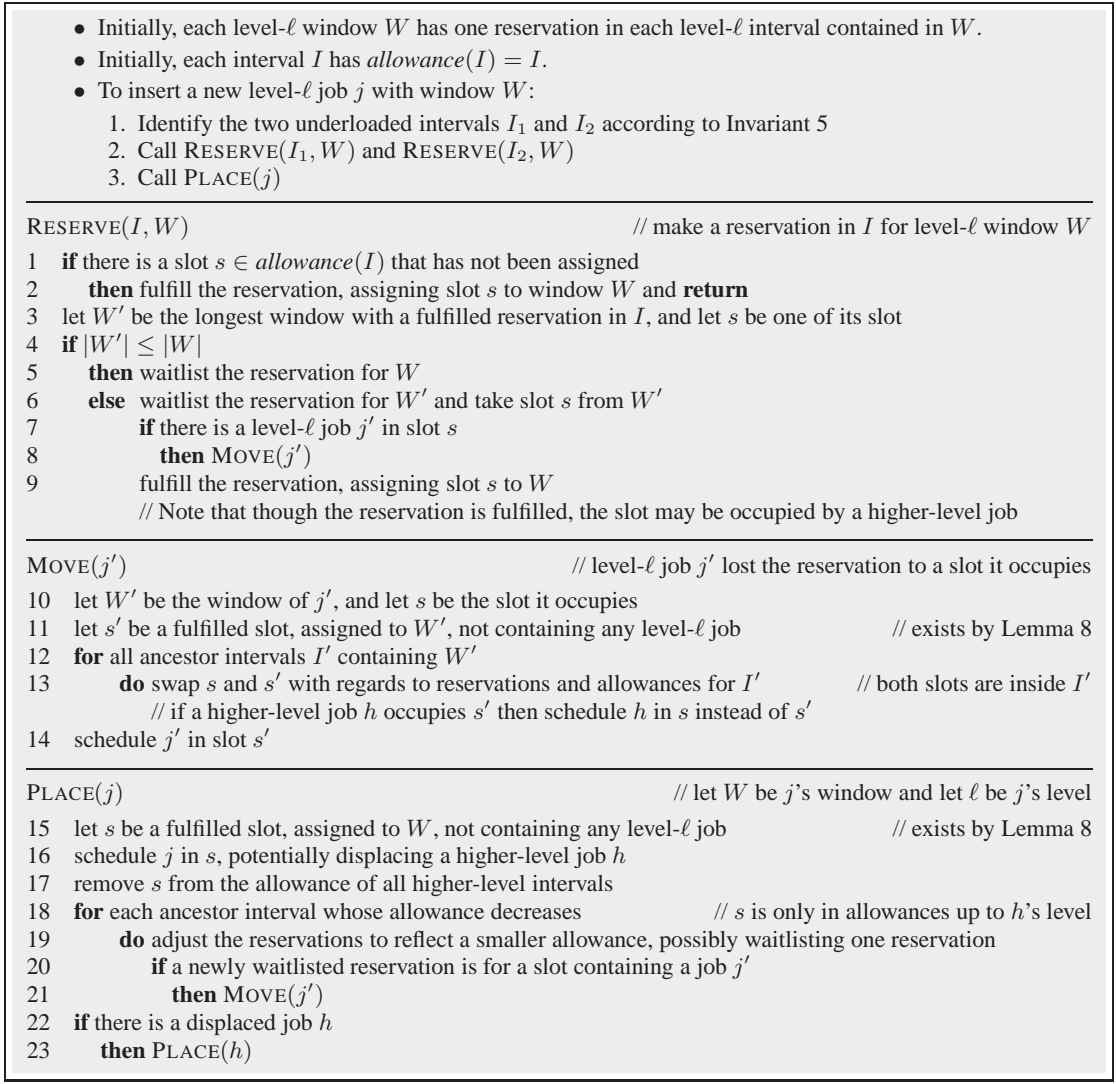
\figlabel{algorithm}
\begin{center}
\fbox{\colorbox{light-gray}{
\begin{minipage}[t]{.8\textwidth}
  \begin{itemize}
  \item Initially, each level-$\ell$ window $W$ has one
    reservation in each level-$\ell$ interval contained in $W$.
  \item Initially, each interval $I$ has $\id{allowance}(I) = I$.
  \item To insert a new level-$\ell$ job $j$ with window $W$:
    \begin{enumerate}
    \item Identify the two underloaded intervals $I_1$ and $I_2$
      according to \invref{reservations}
    \item Call $\proc{Reserve}(I_1,W)$ and $\proc{Reserve}(I_2,W)$
    \item Call $\proc{Place}(j)$
    \end{enumerate}
  \end{itemize}
\vspace{-.2em}\hrule\vspace{-.7em}
  \begin{codebox}
    \Procname{$\proc{Reserve}(I,W)$ \hfill \Comment{make a reservation
        in $I$ for level-$\ell$ window $W$}} \li \If there is a slot
    $s\in\id{allowance}(I)$ that has not been assigned \li \Then fulfill the reservation, assigning slot $s$ to window $W$
    and \Return
        \End
    \li let $W'$ be the longest window with a fulfilled
    reservation in $I$, and let $s$ be one of its slot
    \li \If $\card{W'} \leq \card{W}$
    \li \Then waitlist the reservation for $W$
    \li \Else waitlist the reservation for $W'$ and take slot
    $s$ from $W'$
    \li       \If there is a level-$\ell$ job $j'$ in slot $s$
    \li       \Then $\proc{Move}(j')$
              \End
    \li       fulfill the reservation, assigning slot $s$ to $W$
    \zi       \Comment{Note that though the reservation is fulfilled, the slot
    may be occupied by a higher-level job}
        \End
  \end{codebox}
\vspace{-.7em}\hrule\vspace{-.7em}
  \begin{codebox}
    \setcounter{codelinenumber}{9}
    \Procname{$\proc{Move}(j')$ \hfill \Comment{level-$\ell$ job $j'$ lost the
        reservation to a slot it occupies}}
    \li let $W'$ be the window of $j'$, and let $s$ be the slot it occupies
    \li let $s'$ be a fulfilled slot, assigned to $W'$, not containing
    any level-$\ell$ job\RComment{exists by \lemref{reservationspace}}
    \li \For all ancestor intervals $I'$ containing $W'$
    \li \Do swap $s$ and $s'$ with regards to reservations and
    allowances for $I'$ \RComment{both slots are inside $I'$}
    \zi     \Comment{if a higher-level job $h$ occupies $s'$ then
      schedule $h$ in $s$ instead of $s'$}
        \End
    \li schedule $j'$ in slot $s'$    
  \end{codebox}
\vspace{-.7em}\hrule\vspace{-.7em}
  \begin{codebox}
    \setcounter{codelinenumber}{14}
    \Procname{$\proc{Place}(j)$ \hfill \Comment{let $W$ be $j$'s
      window and let $\ell$ be $j$'s level}}
    \li let $s$ be a fulfilled slot, assigned to $W$, not containing
    any level-$\ell$ job \RComment{exists by
      \lemref{reservationspace}}
    \li schedule $j$ in $s$, potentially displacing a higher-level job
    $h$
    \li remove $s$ from the allowance of all higher-level intervals
    \li \For each ancestor interval whose allowance decreases
    \RComment{$s$ is only in allowances up to $h$'s level}
    \li \Do adjust the reservations to reflect a smaller allowance,
    possibly waitlisting one reservation
    \li     \If a newly waitlisted reservation is for a slot containing a job $j'$
    \li     \Then $\proc{Move}(j')$
            \End
        \End
    \li \If there is a displaced job $h$
    \li \Then $\proc{Place}(h)$
        \End
  \end{codebox}
\end{minipage}
}}
\end{center}
\caption{Pecking-order scheduling with reservations.}
\end{figure*}

\subsubsection*{Scheduling Across All Levels}

Consider inserting a level-$\ell$ job $j$.  Suppose $j$'s window is
contained in a higher-level interval $I'$.  We schedule $j$ at its own
level according to the pecking-order scheduler, without regard to
higher-level schedulers.  Recall that the first step of the insertion
is placing two new reservations. Whenever the reservations cause
another level-$\ell$ job $j'$ to move from slot $s$ to slot $s'$, the
allowance of all higher-level intervals must be updated to
reflect the change in slot usage. However, since both $s \in I'$ and
$s' \in I'$, and $j'$ vacates the original slot $s$, there is no net
change to $\card{\id{allowance}(I')}$.  It is thus sufficient to swap
$s$ and $s'$ for all higher-level intervals $I'$, which may result in
a total of one higher-level job movement.

After updating the reservations, the new job $j$ is placed in one of
its assigned slots $s$.  This slot may either be empty, or it may
contain a higher-level job $h$---the scheduler chooses $s$ without
regard to these possibilities.  In either case, the slot $s$ will be
used by $j$, so it must be removed from $\id{allowance}(I')$ for any
ancester interval $I'$---meaning the higher-level scheduler cannot use
this slot.  If the slot $s$ was empty, then the job $j$ is assigned to
that slot and the insertion terminates.  If the slot $s$ was
previously occupied by a higher-level job $h$, then $h$ is displaced
and a new slot must be found.  Unlike in the case of reservations,
$\card{\id{allowance}(I')}$ decreases here and we do not immediately
have a candidate slot into which to place $h$.  Instead, we reinsert
$h$ recursively using the scheduler at its level.  This displacement
and reinsertion may cascade to higher levels.

Observe that the higher-level scheduler is unaware of the reservation
system employed by the lower-level scheduler.  It only knows which
slots are in its allowance. These slots are exactly those that are not
\emph{occupied} by short-window jobs.  The interval does not observe
the reservations occurring within nested intervals---only actual job
placement matters.  When a lower-level job is deleted, the allowance
of the containing interval increases to include the slot that is no
longer occupied.

\subsection*{Reservation Analysis}

We now use the following lemma to establish
\invref{slot-always-exists}, which claims that there are always enough
fulfilled reservations.  Since the reservations fulfilled by each
interval are history independent (see \obsref{history-independent}),
this proof applies at all points during the execution of the
algorithm.

\begin{lemma}
  Suppose that a sequence of aligned scheduling requests
  8-underallocated.  If there are $x$ jobs each having the same window
  $W$, then $W$ has at least $x+1$ fulfilled
  reservations.\lemlabel{reservationspace}
\end{lemma}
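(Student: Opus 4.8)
The plan is to bound the number of \emph{waitlisted} reservations of $W$. Write $\card{W}=2^kL_\ell$ (so $k\ge 1$), so that $W$ contains $2^k$ level-$\ell$ intervals and, by \invref{reservations}, holds exactly $2x+2^k$ reservations distributed round-robin, with every interval holding at most $R:=\lceil 2x/2^k\rceil+1$ of them. It then suffices to show that at most $x+2^k-1$ of $W$'s reservations are waitlisted.

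The first and main step is a structural observation about an interval $I\subseteq W$ in which $W$ loses a reservation. The level-$\ell$ windows competing for $I$'s slots are linearly ordered by containment (by alignment), and every such window properly shorter than $W$ lies inside $W$; since an interval fulfills reservations shortest-window-first and the fulfilled set is history independent (\obsref{history-independent}), if any of $W$'s reservations in $I$ is waitlisted then all of $I$'s allowance is assigned, with $f_W(I)$ slots going to $W$ (where $f_W(I)$ counts $W$'s fulfilled reservations in $I$) and every other assigned slot going to a shorter window. Letting $D(I)$ be the number of $I$'s slots occupied by lower-level jobs and $S(I)$ be the total number of reservations placed in $I$ by level-$\ell$ windows properly shorter than $W$, I would deduce $D(I)+S(I)\ge L_\ell-f_W(I)$ whenever $W$ is waitlisted in $I$. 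Since $f_W(I)=r_W(I)-w_W(I)$ with $w_W(I)$ the number of $W$'s waitlisted reservations in $I$ and $r_W(I)\le R\le L_\ell$, an elementary estimate then shows that each such interval contributes at least $\tfrac{L_\ell}{R}\,w_W(I)$ to $\sum_I(D(I)+S(I))$; hence the total number of $W$'s waitlisted reservations is at most $\tfrac{R}{L_\ell}\sum_I(D(I)+S(I))$.

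The second step bounds $\sum_I(D(I)+S(I))$ from the underallocation. Each level-$\ell$ window $V$ of span $2^{k_V}L_\ell$ properly inside $W$ contributes all $2x_V+2^{k_V}$ of its reservations, all lying inside $W$, so $\sum_I S(I)$ is the sum of $2x_V+2^{k_V}$ over these $V$; similarly $\sum_I D(I)$ equals the number of lower-level jobs whose windows lie inside $W$. By \lemref{underallocated} with $\gamma=8$, the count of $W$-jobs plus properly-shorter level-$\ell$ jobs plus lower-level jobs, all inside $W$, is at most $2^kL_\ell/8$; this gives $\sum_I D(I)+2\sum_V x_V\le 2^kL_\ell/4$. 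And Equation~\ref{eqn:windowcount} bounds the number of distinct level-$\ell$ window spans by $L_\ell/4$, so $\sum_V 2^{k_V}=(k-1)2^k\le (L_\ell/4)2^k$. Adding, $\sum_I(D(I)+S(I))\le 2^kL_\ell/2$, so at most $R\,2^{k-1}$ of $W$'s reservations are waitlisted.

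The last step is the arithmetic. For $x\ge 1$, set $t=\lceil 2x/2^k\rceil\ge 1$; then $2x>(t-1)2^k$ forces the integer $x$ to satisfy $x\ge(t-1)2^{k-1}+1$, whence $R\,2^{k-1}=(t+1)2^{k-1}=(t-1)2^{k-1}+2^k\le x+2^k-1$, which is what we wanted, so at least $x+1$ reservations are fulfilled. For $x=0$ we have $R=1$, so at most $2^{k-1}$ reservations are waitlisted and at least $2^k-2^{k-1}=2^{k-1}\ge 1$ are fulfilled. I expect the crux to be the amortized charging in the first step: because $W$'s reservations are spread evenly over its $2^k$ intervals rather than concentrated, a blocked interval may still fulfill up to $R-1$ of them, so the naive estimate ``each saturated interval soaks up $L_\ell$ units of blocking demand and destroys all of $W$'s reservations there'' loses a constant factor; one must instead charge waitlisted reservations against blocking demand at the slower rate $L_\ell/R$, and it is exactly this refinement that makes the $8$-underallocation hypothesis barely sufficient.
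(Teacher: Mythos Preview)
Your proof is correct, but it takes a more elaborate route than the paper's.  The paper bounds the total blocking demand in $W$ (lower-level jobs plus \emph{all} level-$\ell$ reservations from windows contained in $W$, including $W$'s own) by $\card{W}/2$, observes that an interval can waitlist one of $W$'s reservations only if its local blocking exceeds $L_\ell$, and concludes by pigeonhole that strictly fewer than half the $2^k$ intervals can waitlist anything from $W$.  In the remaining (strictly more than $2^{k-1}$) intervals every one of $W$'s $\geq\lfloor 2x/2^k\rfloor+1$ reservations is fulfilled, and $(\lfloor 2x/2^k\rfloor+1)\cdot 2^{k-1}\geq x$ finishes it.

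Your argument instead charges each waitlisted reservation $L_\ell/R$ units of blocking demand, allowing for intervals that fulfill some but not all of $W$'s reservations.  This is a genuine refinement in principle, but here it buys nothing: both arguments use the same global bound $\sum_I(D(I)+S(I))\leq\card{W}/2$ and both succeed with exactly $8$-underallocation.  Your closing remark that the ``naive estimate\ldots loses a constant factor'' and that the refined charging is what makes $8$ suffice is therefore mistaken---the paper uses precisely the coarse interval-counting approach you describe as naive, and it works.

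One small gap: you assert $R\leq L_\ell$ without justification.  It follows from $8$-underallocation (since $x\leq\card{W}/8$ gives $\lceil 2x/2^k\rceil\leq L_\ell/4$), but you should say so, as your charging inequality $L_\ell-R+w_W(I)\geq(L_\ell/R)\,w_W(I)$ fails when $R>L_\ell$.
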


\begin{proof}
  Let $\card{W} = 2^kL_\ell$ for level-$\ell$ window $W$.  Let $y$ be
  the number of level-$\ell$ jobs with windows nested inside $W$.
  Each of those windows makes 2 reservations for each job, plus an
  extra reservation to each of the $2^k$ intervals.  So the total
  number of reservations in $W$ is at most $2(x+y) + 2^k\lg W$. In
  addition, let $z$ be the number of lower-level jobs nested inside
  $W$.  Since we are 8-underallocated, we have $2(x+y) + z \leq
  2(x+y+z) \leq \card{W}/4$ by
  \lemref{underallocated}. By Equation~\ref{eqn:windowcount}, we have $\lg W
  \leq L_\ell/4$, and hence $2^k\lg W \leq (2^kL_\ell)/4 =
  \card{W}/4$.  Summing these up, we have that at most $\card{W}/2$
  slots consumed by lower-level jobs and these reservations.

  In order for a particular interval to waitlist even one of $W$'s
  reservation requests, it would need to have strictly more than
  $L_\ell$ of these reservations or lower-level jobs assigned to
  it. But there are only $\card{W}/2$ slots consumed in total, so
  strictly less than $1/2$ the intervals can waitlist even one of
  $W$'s reservations.  Since window $W$ reserves at least
  $\floor{2x/2^k}+1$ slots in every one of the $2^k$ intervals by
  \invref{reservations}, it must therefore be granted strictly more
  than $(\floor{2x/2^k}+1)(1/2)(2^k) \geq x$ fulfilled reservations.
\end{proof}

Since each window $W$ containing $x$ jobs has at least $x+1$ fulfilled
reservations at intervals within $W$, there is always an appropriate
slot to schedule a new belonging to this window.  This ensures that
there each operation leads to only $O(1)$ reallocations at each level.

\subsection*{\boldmath Trimming Windows to {\large $n$} and Deamortization}

Ideally, the reallocation cost for a request $r$ should be a function
of the number of active jobs $n_r$ in the system when request $r$ is
made.
To achieve this performance guarantee, we maintain a value $n^*$ that
is roughly the number of jobs in the current schedule.  When the number of
active jobs exceeds $n^*$, we double $n^*$; when the number of active
jobs shrinks below $n^*/4$, we halve $n^*$.

For every job that has a window larger than $2\gamma n^*$, we trim its
window---reducing it arbitrarily to size $2\gamma n^*$.  The adjusted
instance remains $\gamma$-underallocated, since there are at most $n^*$
other jobs scheduled in the trimmed window of size $2\gamma n^*$.

To achieve good amortized performance, it is enough to rebuild the
schedule from scratch each time we change the value of $n^*$.  This
rebuilding incurs an amortized $O(1)$ reallocation cost.

This amortized solution can be deamortized, as long as the scheduling
instance is sufficiently underallocated that the following property
holds: if each job is duplicated (i.e., inserted twice on inserts,
deleted twice on delete), the resulting instance is
$\gamma$-underallocated, for appropriate constant $\gamma$.  This
property holds as long as the initial (unduplicated) scheduling
instance is $2\gamma$-underallocated.

The idea is to rebuild the schedule gradually, performing a little
update every time a new reallocation request is serviced.  This
approach is reminiscent to how one deamortizes the rebuilding of a
hash table that is too full or too empty.
We use the even (or odd) time slots for the old schedule and the odd
(or even) time slots for the new schedule.  Instead of rebuilding the
schedule all at once, every time one job is added or deleted, two jobs
are moved from the old schedule to the new schedule. 

\subsection*{Wrapping Up}

We conclude with the following lemma, which puts together the various
results in this section:
\begin{lemma}
  There exists a constant $\gamma$ and a single-machine reallocating
  scheduler such that for any $1$-machine $\gamma$-underallocated
  sequence of aligned scheduling requests, we achieve the following
  performance.  Let $n_i$ denote the number of jobs in the schedule
  and $\Delta_i$ the largest window size when the $i$th reallocation
  takes place.  Then the $i$th reallocation has cost
  $O(\min\left\{\log^* n_{i},\log^* \Delta_{i}\right\})$.
  \lemlabel{singlemachine}
\end{lemma}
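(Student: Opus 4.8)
Here is a plan for proving \lemref{singlemachine}; it assembles the pieces developed in this section.

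\emph{Setting the constant, and overview.} I would first fix $\gamma$ to be a sufficiently large absolute constant --- large enough that the successive constant-factor losses incurred below (trimming a window down to $2\gamma n^*$, duplicating every job for the deamortization, and restricting each of two interleaved schedules to alternating timeslots) still leave the instance actually fed to the core scheduler $8$-underallocated, so that \lemref{reservationspace}, and hence \invref{slot-always-exists}, applies. Tracking the factors (roughly $8\cdot 2\cdot 2$) shows a constant on the order of $32$ suffices. With $\gamma$ so fixed, the proof has three ingredients: (i) the core reservation scheduler of \figref{algorithm} costs $O(1)$ reallocations per \emph{level}; (ii) the level/interval decomposition, after trimming, has only $O(\min\{\log^* n,\log^*\Delta\})$ levels; (iii) the $n^*$-rebuild is deamortizable.

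\emph{Per-request cost of the core scheduler, one level at a time.} Here I would simply trace \figref{algorithm}. Inserting or deleting a level-$\ell$ job changes only $O(1)$ of window $W$'s reservations, so it triggers at most two \proc{Reserve} calls (on an insert) or the symmetric removals (on a delete), plus one \proc{Place}. Each \proc{Reserve} waitlists at most one previously fulfilled reservation and hence calls \proc{Move} at most once; \proc{Move} relocates exactly one level-$\ell$ job and swaps its old and new slots in all ancestor intervals, forcing at most one incidental higher-level relocation but no recursion. \proc{Place} schedules $j$, possibly displacing one higher-level job $h$, shrinks the ancestor allowances by one slot, waitlists at most one reservation (hence at most one more \proc{Move}), and, if $h$ was displaced, re-places it by a single recursive call at $h$'s \emph{strictly} higher level. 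So each level performs $O(1)$ work and passes at most one job strictly upward; throughout, \invref{slot-always-exists} --- valid at every intermediate configuration by \obsref{history-independent} --- supplies the empty fulfilled slot each step needs, so feasibility is preserved. Summing over the levels touched, one request costs $O(1)$ times the number of levels.

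\emph{Counting levels via trimming, and deamortization.} Since $L_{\ell} = 4\lg L_{\ell+1}$ the thresholds grow as a tower function, so a span-$\Delta$ window is a level-$O(\log^*\Delta)$ window, giving (untrimmed) cost $O(\log^*\Delta)$. To also obtain the $\log^* n$ bound, maintain a power-of-two estimate $n^*$ of the active-job count by doubling/halving, and trim every window to size $\min\{\card{W},\,2\gamma n^*\}$; by \lemref{underallocated} this preserves $\gamma$-underallocation (a factor already absorbed into $\gamma$). The largest span is now $\min\{\Delta,2\gamma n^*\}=O(\min\{\Delta,n\})$, so a request costs $O(\log^*\min\{n,\Delta\})=O(\min\{\log^* n,\log^*\Delta\})$; rebuilding from scratch whenever $n^*$ changes adds amortized $O(1)$ per request because $\Theta(n^*)$ requests separate consecutive rebuilds. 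Finally, deamortize: put the old schedule on even timeslots and the new schedule on odd timeslots, and on every request migrate two jobs from old to new so a rebuild finishes before the next change of $n^*$; the core scheduler only ever operates on one of these two sub-instances, which by the choice of $\gamma$ is $8$-underallocated. Combining the three ingredients yields the claimed per-reallocation cost.

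\emph{Where the difficulty lies.} The substantive step is the second one: being careful that the recursive cascade is genuinely ``$O(1)$ jobs per level, at most one job handed strictly upward'' --- i.e.\ that \proc{Reserve}, \proc{Move}, and \proc{Place} never branch --- and that \lemref{reservationspace}'s hypothesis may be invoked at \emph{every} intermediate configuration encountered during a single (possibly long) cascade, which is precisely what the history-independence of \obsref{history-independent} guarantees. The constant-chasing of the first step and the rebuild/deamortization of the third are routine.
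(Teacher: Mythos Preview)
Your proposal is correct and mirrors the paper's own proof: trace \figref{algorithm} to get $O(1)$ reallocations per level (via \lemref{reservationspace}/\invref{slot-always-exists}), count $O(\min\{\log^* n,\log^*\Delta\})$ levels from the tower growth of $L_\ell$ plus window trimming to $O(\gamma n^*)$, and deamortize the $n^*$ rebuilds via the even/odd-slot trick. One minor looseness: a single \proc{Place} call can trigger one \proc{Move} at \emph{each} ancestor level up to the displaced job's level (not just one \proc{Move} total), but your per-level $O(1)$ conclusion --- which is what matters --- is correct and matches the paper's ``in aggregate one \proc{Move} per level.''
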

\begin{proof}
We consider the performance of the pecking-order scheduler with
reservations, where we maintain an estimate $n^*$ via deamortized
shrinking and doubling and trim all windows to $\gamma n^*$.  

\lemref{reservationspace} shows that there is always a slot available
to put a job (\invref{slot-always-exists}), and hence we observe that
there are at most $O(1)$ reallocations at each level of the scheduler.
Specifically, on insertion, the two reservations may result in two
calls to \proc{Move} for jobs at the same level as the one being
inserted.  Each \proc{Move} results in one reallocation of the job
being moved, plus at most one reallocation at a higher level.  Then
the call to \proc{Place} may cascade across all levels, but it in
aggregate it only includes one \proc{Move} per level, each causing at
most two reallocations.

If $\Delta_i$ is the largest job size when operation $i$ occurs, there
are no more than $O(\log^*\Delta_i)$ levels.  Since $n_i^* \leq 4n$, and all
windows are trimmed to length $\gamma n^*$, we also know that there
are no more than $O(\log^* (4\gamma n_i))$ levels.  From this the result
follows.
\end{proof}


\section{Reallocating Unaligned Jobs on Multiple Machine}
\seclabel{unaligned}

\renewcommand{\aligned}[1]{\proc{aligned}(\ensuremath{#1})}

In this section, we generalize to jobs that are not aligned, removing
the alignment assumptions that we made in
\secreftwo{parallel}{aligned}.  We show that if $S$ is a
$\gamma$-underallocated sequence of scheduling requests, then we can
safely \emph{trim} each of the windows associated with each of the
jobs, creating an aligned instance.  Since the initial sequence of
scheduling requests is underallocated, the resulting aligned sequence
is also underallocated, losing only a constant factor.

We first define some terminology.  If $W$ is an arbitrary window, we
say that $\aligned{W}$ is a largest aligned window that is contained
in $W$.  (If there is more than one largest window, choose
arbitrarily.) Notice that $|\aligned{W}| \geq \card{W}/4$.  If $J$ is
a set of jobs, then $\aligned{J}$ is the set of jobs in which the
window $W$ associated with each job is replaced with $\aligned{W}$.

\begin{lemma} \lemlabel{aligned}
  Consider any $m$-machine $4\gamma$-underallocated set of jobs $J$,
  where $\gamma$ is an integer.  Then \aligned{J} is
  $m$-machine $\gamma$-underallocated.
\end{lemma}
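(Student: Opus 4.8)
The plan is to reduce the statement to a counting bound and then apply Lemma~\lemref{underallocated}, mirroring the structure of the proof of Lemma~\lemref{parallel}. The one geometric input is the fact already recorded in the text: a largest aligned window $\aligned{W}$ inside $W$ satisfies $|\aligned{W}| \geq |W|/4$, hence $|W| \leq 4|\aligned{W}|$. Consequently, if a job $j$ of $J$ has $\aligned{W_j} \subseteq V$ for an aligned window $V$, then $W_j$ overlaps $V$, has span at most $4|V|$, and --- using that $\aligned{W_j}$ is a \emph{largest} aligned window in $W_j$ --- protrudes past $V$ by at most $3|\aligned{W_j}| \leq 3|V|$ on each side; so $W_j$ lies in a time-interval $N(V)$ obtained by extending $V$ by a constant factor. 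In other words, every job of $\aligned{J}$ whose trimmed window lies in $V$ arises from a job of $J$ whose original window lies in $N(V)$.

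Next I would invoke the feasibility characterization for recursively aligned instances: such a set of jobs is $m$-machine $\gamma$-underallocated as soon as, for every aligned window $V$, the number of its jobs with window contained in $V$ is at most $m|V|/\gamma$ (up to a small safety constant to absorb rounding to multiples of $\gamma$). The ``only if'' part is exactly Lemma~\lemref{underallocated}; the ``if'' part is the ``simple inductive argument'' already used in the proof of Lemma~\lemref{parallel} --- process the aligned windows from smallest to largest and observe that an aligned window of span $s$ offers $ms/\gamma$ length-$\gamma$ slots, comfortably accommodating the at-most-that-many jobs that demand it. Since $\aligned{J}$ is recursively aligned, it therefore suffices to prove that for every aligned window $V$ the number of jobs of $\aligned{J}$ with window contained in $V$ is at most $m|V|/\gamma$.

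Finally I would establish this bound from the hypothesis that $J$ is $4\gamma$-underallocated. Fix a feasible length-$4\gamma$ schedule of $J$, which exists by hypothesis. By the first paragraph, the relevant jobs of $\aligned{J}$ correspond to jobs of $J$ whose windows all lie inside $N(V)$; in the fixed schedule each such job occupies $4\gamma$ of the $m\cdot|N(V)| = O(m|V|)$ timeslots of $N(V)$, disjointly, so their number is $O(m|V|/\gamma)$ --- equivalently, cover $N(V)$ by a constant number of aligned windows and sum the bound of Lemma~\lemref{underallocated} over them. I expect the main obstacle to be purely bookkeeping: the factor-$4$ slack in the hypothesis has to pay both for the (up to factor $4$) window growth caused by trimming and for the overhead of covering $N(V)$ by aligned windows, so one must keep $N(V)$ and the covering as tight as possible; pinning the numeric constant in the hypothesis down to exactly $4$ (rather than some larger constant) is where the care goes, but it is routine rather than conceptual. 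Assembling the resulting count with the characterization of the second paragraph yields that $\aligned{J}$ is $\gamma$-underallocated.
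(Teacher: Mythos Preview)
Your approach is essentially the same as the paper's. The paper argues by contradiction: if $\aligned{J}$ were not $\gamma$-underallocated, the Hall-type feasibility criterion would yield an interval $W$ containing more than $m|W|/\gamma$ trimmed windows; the corresponding original windows all lie in a region of size $O(|W|)$, and counting slots in the assumed $4\gamma$-feasible schedule of $J$ over that region gives a contradiction. You do the same thing directly rather than by contradiction, and you are more explicit than the paper about the two ingredients (the ``if'' direction of the feasibility characterization, and the geometric containment $W_j \subseteq N(V)$), but the substance is identical.
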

\begin{proof}
  Assume for the sake of contradiction that $\aligned{J}$ is not
  $\gamma$-underallocated.  This implies that there must exist a
  window $W$ that has $> m\card{W}/\gamma$ jobs with trimmed windows
  contained in $W$ (as otherwise we could schedule the size $\gamma$
  jobs via a simple inductive argument).  Let $J' \subseteq J$ be the
  jobs whose trimmed windows are contained in $W$.

Since $J$ is $4\gamma$-underallocated, we now examine an (unaligned)
scheduling of the jobs in $J'$ that satisfies the
$4\gamma$-underallocation requirement.  We observe that all the jobs
in $J'$ are scheduled in a region of size at most $4\card{W}$.
  However, since the schedule is $4\gamma$-underallocated, there can be
  at most $4m|W|/(4\gamma)$ jobs in this region of size $4|W|$.
  That is $|J'| \leq m|W|/\gamma$, which is a contradiction.
\end{proof}

\subsection*{}

\noindent From this we can  conclude with the proof of
\thmref{unit-size}:

\medskip

\noindent
{\sc{Proof of \thmref{unit-size}.}}
Jobs are scheduled as
  follows: first, a new job has its window aligned; second, it is
  delegated to a machine in round-robin fashion; finally, it is
  scheduled via single-machine pecking-order scheduling with
  reservations.  When a job is deleted, it is removed by the
  appropriate single-machine scheduler, and then there is at most one
  migration to maintain the balance of jobs across machines.  This is
  the only time that jobs migrate.

  \lemref{aligned} shows that the set of aligned jobs is $m$-machine
  $\gamma/4$-underallocated, and \lemref{parallel} shows that the jobs
  asigned to each machine are $1$-machine $\gamma/24$-underallocated.
  Finally, \lemref{singlemachine} shows that each single-machine
  scheduler operation has cost $O(\min\left\{\log^* n_{i},\log^*
    \Delta_{i}\right\})$---and each job addition or deletion invokes
  $O(1)$ single-machine scheduler operations.\putqed



\section{What Happens Without\\ Underallocation?}
\seclabel{why}

This section explains what happens without underallocation and
why migrations are necessary at all. 

If migration cost is to be bounded only by reallocation cost and since 
jobs have unit size, it is
trivial to transform a parallel instance to a single-machine instance
my making a single machine go $m$ times faster. 
Since migration cost
across machines could be more expensive than rescheduling a single
machine, we are interested in providing a tighter bound on the
migration cost.  The question then is: \emph{are migrations
  necessary?}  The following lemma shows that they are.  In
fact, the per-request migration cost must be $\Omega(1)$ in the
worst-case for any deterministic algorithm.

\begin{lemma}
There exists a sufficiently large sequence of $s$
job insertions/deletions
on  $m>1$ machines, such that any
deterministic scheduling algorithm has a total migration cost of
$\Omega(s)$.
\lemlabel{why-migrations}
\end{lemma}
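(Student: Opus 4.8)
The plan is to exhibit a request sequence on $m=2$ machines (the construction generalizes trivially to $m>2$ by ignoring the extra machines) that repeatedly forces any deterministic scheduler to migrate a job. The core idea is an adversarial ``ping-pong'': set up a tiny fully-subscribed subinstance in which exactly two jobs compete for a single feasible timeslot across the two machines, observe which machine the algorithm commits the critical job to, and then issue the next request so as to force that job off its machine. Since the algorithm is deterministic, the adversary can always predict the placement and react to it.

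Concretely, I would first build a rigid ``scaffold'' of jobs with no slack: choose a small window, say two timeslots $\{t, t+1\}$, and insert jobs so that on each of the two machines, timeslots $t$ and $t+1$ are almost entirely forced. The goal is to arrange matters so that there is exactly one free slot overall among $\{(\text{machine }1,t),(\text{machine }1,t+1),(\text{machine }2,t),(\text{machine }2,t+1)\}$, and one additional job $j$ (with window $[t,t+1]$, say) that must occupy it. Actually it is cleaner to force a configuration where one machine has both of its slots $t,t+1$ occupied by jobs whose windows are exactly $[t,t]$ and $[t+1,t+1]$ (hence immovable), while the other machine has exactly one free slot for the ``floating'' job $j$. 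Then the adversary deletes the immovable job blocking $j$'s current machine and simultaneously (on the next request) inserts a new immovable job on $j$'s current machine in the slot $j$ occupies — forcing $j$ to migrate to the other machine. Repeating this with the roles of the machines swapped each time yields one migration per $O(1)$ requests, for a total of $\Omega(s)$ over $s$ requests.

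The main obstacle — and the step I would spend the most care on — is making the gadget genuinely \emph{rigid} so that the migration is truly forced rather than merely convenient: I need that after the adversary's move there is literally no feasible placement of $j$ on its old machine, so that any feasible schedule (which the algorithm must produce) migrates $j$. This requires padding each machine's two relevant timeslots with unit jobs whose windows are single timeslots (span-$1$ windows $[t,t]$ and $[t+1,t+1]$), which are immovable by definition, and being careful that inserting/deleting these padding jobs does not itself give the algorithm a cheap escape by shuffling $j$ within a machine. A secondary bookkeeping point is to keep the whole construction self-consistent under the model's requirement that \emph{every} intermediate configuration be feasible: I must interleave the deletion of the old blocker and the insertion of the new blocker in the right order (delete first so $j$ still has somewhere to go, namely the other machine; then insert the new blocker) so that feasibility is never violated, while still leaving $j$ with no choice but the other machine. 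Once the gadget is rigid, the counting is immediate: each phase uses a constant number of requests and forces at least one migration, and $s/O(1) = \Omega(s)$ phases fit in a length-$s$ sequence, giving total migration cost $\Omega(s)$. Note this construction inherently uses a fully-subscribed (not underallocated) instance, which is exactly the point of the section.
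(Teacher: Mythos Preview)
Your ping-pong idea is sound and can be made to work, but as written it has a real gap: span-$1$ jobs are pinned in \emph{time} but not in \emph{machine}. A job with window $[t,t]$ must sit at slot $t$, but on any of the $m$ machines; so a single such job is not ``immovable,'' and the algorithm is free to slide your blockers across machines rather than migrate the floating job~$j$. Relatedly, the adversary never gets to ``insert a job on $j$'s current machine''---it only specifies a window; the algorithm chooses the machine. The fix is to saturate each blocked timeslot with $m$ copies of the corresponding span-$1$ job, so every machine's copy is truly pinned; then inserting one more span-$1$ job at $j$'s current timeslot forces some migration (of $j$ or of a blocker---either way it counts). This also shows why the generalization to $m>2$ is not by ``ignoring the extra machines'': you must scale the number of blockers with~$m$.

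The paper takes a different and cleaner route that avoids this per-slot rigidity analysis entirely. It inserts $2m$ identical span-$2$ jobs with window $[0,2]$ (forcing two per machine), then deletes exactly the $m$ jobs the algorithm placed on the first $m/2$ machines, concentrating the $m$ remaining span-$2$ jobs on the last $m/2$ machines. Inserting $m$ span-$1$ jobs with window $[0,1]$ then forces every machine to hold one span-$1$ job at slot~$0$ and one span-$2$ job at slot~$1$, so $m/2$ of the span-$2$ jobs must migrate. This is a one-shot counting argument ($6m$ requests yield $m/2$ migrations, repeated $s/(6m)$ times) rather than a gadget-rigidity argument; it handles all $m$ uniformly and sidesteps the question of which particular job the algorithm chooses to move.
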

\begin{proof}
  Without loss of generality, assume $6m$ divides $s$. Divide the $s$
  requests into $s/(6m)$ consecutive subsequences of $6m$ requests
  each.  Each subsequence is as follows:
  \begin{closeenum}
  \item Insert $2m$ span-2 jobs with window $[0,2]$.
  \item Delete the $m$ jobs scheduled on the first $m/2$ machines. 
  \item Insert $m$ span-1 jobs with windows $[0,1]$.  
  \item Delete all $2m$ remaining jobs.
  \end{closeenum}
  After step~1, the only feasible schedule is to put two jobs on each
  machine.  After step 2, half the machines have two jobs, and the other
  half of the machines have no jobs.  The only feasible schedule after
  step 3 is to have on each machine a span-1 job starting at time 0, and a
  span-2 job starting at time 1.  This means that half of the span-2
  jobs must migrate across machines, causing $m/2$ migrations.  There
  are thus $m/2$ migrations for every $6m$ requests, or a total of
  $s/12$ migrations.
\end{proof}

It is also easy to see that for some sequences of scheduling requests,
if they are not underallocated, 
it is impossible to achieve low reallocation costs, even if there
exists a feasible schedule. 

\begin{lemma}\lemlabel{expensive}
There exists a sequence of $s$ job inserts/deletions, 
such that any scheduling algorithm has a
rescheduling cost of $\Omega(s^2)$.
  \lemlabel{why-underallocation}
\end{lemma}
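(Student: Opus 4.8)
The plan is to exhibit a single-machine request sequence that keeps the instance \emph{exactly} full and forces essentially every currently-scheduled job to move on a constant fraction of the requests. Fix $n=\Theta(s)$. The persistent jobs are $n$ span-$2$ jobs $P_1,\dots,P_n$, where $P_i$ has window $[i,i+2]$, so $P_i$ can occupy only slot $i$ or slot $i+1$. A short exchange argument shows that the feasible schedules of $\{P_1,\dots,P_n\}$ on one machine are exactly the schedules $\sigma_e$, one per ``empty slot'' $e\in\{1,\dots,n+1\}$, where $\sigma_e$ puts $P_i$ in slot $i$ for $i<e$ and in slot $i+1$ for $i\ge e$. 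The gadget that removes the scheduler's freedom is a \emph{pinned} job: a span-$1$ job $L$ with window $[1,2]$ can occupy only slot $1$, and a span-$1$ job $R$ with window $[n+1,n+2]$ can occupy only slot $n+1$.

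The sequence first inserts $P_1,\dots,P_n$ (each step leaves a feasible instance), and then repeats the following four-request \emph{round} $R=\Theta(s)$ times, with fresh names $L_r,R_r$ each round: (i) insert $L_r$; (ii) delete $L_r$; (iii) insert $R_r$; (iv) delete $R_r$. I would first check that all intermediate instances are feasible: $\{P_1,\dots,P_n,L_r\}$ has the unique feasible schedule placing $L_r$ in slot $1$ and the $P_i$'s in $\sigma_1$ (each $P_i$ in slot $i+1$); $\{P_1,\dots,P_n,R_r\}$ has the unique feasible schedule placing $R_r$ in slot $n+1$ and the $P_i$'s in $\sigma_{n+1}$ (each $P_i$ in slot $i$); and after (ii) and (iv) the active set is just $\{P_1,\dots,P_n\}$, which is feasible. (Note that $\{P_1,\dots,P_n,L_r,R_r\}$ is \emph{not} feasible, which is precisely why the round must delete $L_r$ before inserting $R_r$.) Taking $n=\lfloor s/5\rfloor$ and $R=\lfloor(s-n)/4\rfloor$ keeps the total number of requests at most $s$.

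The key estimate is that in \emph{every} round, requests (ii) and (iii) together cost at least $n$ reallocations. Right after request (i) the $P_i$'s are in $\sigma_1$, and this is forced --- $\{P_1,\dots,P_n,L_r\}$ has a unique feasible schedule --- so the starting point is independent of everything the algorithm did before. After processing the deletion (ii) the scheduler must output some feasible schedule of $\{P_1,\dots,P_n\}$, hence $\sigma_k$ for some $k$; going from $\sigma_1$ to $\sigma_k$ moves exactly $P_1,\dots,P_{k-1}$, so request (ii) costs $k-1$. Request (iii) then forces the $P_i$'s into $\sigma_{n+1}$, and going from $\sigma_k$ to $\sigma_{n+1}$ moves exactly $P_k,\dots,P_n$, so request (iii) costs $n-k+1$. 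These sum to $n$ regardless of $k$. Summing over the $R$ rounds gives total reallocation cost at least $Rn=\Omega(s^2)$.

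The step I expect to be the crux is exactly this amortization: a naive attempt --- repeatedly inserting and deleting a \emph{single} left-pinned job --- fails, because a lazy scheduler can just leave slot $1$ empty after each deletion and pay nothing on the next insertion. Pinning alternately at the two ends defeats this: a left-pinned insertion packs the whole line of jobs against the right end ($\sigma_1$), a right-pinned insertion packs it against the left end ($\sigma_{n+1}$), and whenever the scheduler moves between these two extremes --- whether it does the work on the deletion in (ii) or on the insertion in (iii) --- it pays $\Omega(n)$ in total. The construction uses a single machine, which already witnesses the lemma; for $m$ machines one replicates the gadget $m$-fold (inserting $m$ copies of each pinned job and using $n=\Theta(s/m)$ persistent jobs per machine), so that every time-column is kept full and every persistent job must change column, yielding $\Omega(s^2)$ for constant $m$.
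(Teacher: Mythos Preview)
Your proof is correct and takes essentially the same approach as the paper: the paper uses the identical chain of span-$2$ jobs with windows $[j,j+2]$ and alternates a unit-span job pinned at the left end with one pinned at the right end, forcing the whole chain to shift back and forth. Your write-up is more careful than the paper's about the amortization over the delete/insert pair (your $(k-1)+(n-k+1)=n$ accounting) and about budgeting the total number of requests, but the construction and the idea are the same.
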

\begin{proof}
  Consider for example $\eta=s/2$ jobs numbered $0,1,\ldots,\eta-1$,
  where job $j$ has window $[j,j+2]$.  With the insertion of one
  additional job having window $[0,1]$, forcing the job to be
  scheduled at time 0, all $\eta$ other jobs are forced to schedule
  during their later slot. If that job is deleted and another
  unit-span job with window $[\eta,\eta+1]$ is inserted, then all jobs
  are forced to schedule during their earlier slot. By toggling
  between these two options, all jobs are forced to move,
  resulting in cost $\Omega(\eta)$ to handle each request.  Repeating
  $\eta$ times gives a total cost of $\Omega(\eta^2) = \Omega(s^2)$.  
\end{proof}


\section{Conclusions and Open\\ Questions}
\seclabel{conclusion}
  
The results in this paper suggest several followup questions.
First, is it possible to generalize this paper's 
reallocation scheduler for the case where jobs are not unit-sized?  
Observe that we are 
limited by the computational difficulty of scheduling with arrival times and 
deadlines
when jobs are not unit size; see~\cite{BansalChKh07} for recent 
results with resource augmentation. 
We are also limited by the following 
observation: 
\begin{observation}
  Suppose there exist jobs of size $1$ and jobs of size $k$, for any $k>1$.  For
  any reallocation scheduler, there is a sequence of $\Theta(n)$
  scheduling requests that has aggregate reallocation cost
  $\Omega(kn)$, for $k \leq n$, even if the requests are
  $\gamma$-underallocated for any constant $\gamma$.
\obslabel{size-k-jobs}
\end{observation}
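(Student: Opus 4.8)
The goal is to exhibit, for any reallocation scheduler and any constant $\gamma$, a $\Theta(n)$-length request sequence of $\gamma$-underallocated requests on jobs of sizes $1$ and $k$ whose aggregate reallocation cost is $\Omega(kn)$. The natural approach mirrors the toggling construction of \lemref{why-underallocation}: set up a ``shelf'' of $\Theta(n)$ large (size-$k$) jobs whose feasible placements are essentially forced into one of two global configurations by the presence or absence of a single small (size-$1$) job, and then toggle that small job in and out $\Theta(n)$ times. Each toggle should force \emph{every} large job to shift, and since each large job has size $k$, shifting it should cost $\Omega(k)$ --- but we must be careful here, because in this paper the reallocation cost counts \emph{jobs moved}, not time consumed, so moving one size-$k$ job costs only $1$, not $k$. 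So the construction has to be arranged so that each toggle forces $\Omega(k)$ distinct jobs (a mix of large and small) to move, and this happens $\Theta(n)$ times.

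\textbf{Construction.} Work on a single machine (the multi-machine case only helps the adversary, and the statement is about existence of a bad sequence). Place a block of $p = \Theta(n/k)$ size-$k$ jobs, the $j$th having window roughly $[jk,\; jk + k + 1]$, so that consecutive large jobs overlap in exactly one ``slack'' unit --- exactly the size-$2$-job chain of \lemref{why-underallocation} scaled up by a factor of $k$. With no small jobs present, there are two extreme feasible packings of this chain: the ``early'' packing, where job $j$ occupies $[jk, jk+k]$, and the ``late'' packing, where job $j$ occupies $[jk+1, jk+k+1]$. Now additionally insert $\Theta(n)$ size-$1$ filler jobs with wide windows, enough to make the instance $\gamma$-underallocated for the desired constant but \emph{also} enough that the one free unit-slot at the far end of the chain is contested: inserting a single size-$1$ job with window $[0,1]$ forces the whole chain into the late packing, and deleting it and inserting a size-$1$ job with window $[pk, pk+1]$ forces it into the early packing. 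Toggling between these two single-job requests $\Theta(n)$ times forces, on each toggle, all $p = \Theta(n/k)$ large jobs to move \emph{plus} $\Theta(n)$ of the small filler jobs to shuffle to re-accommodate the flip; the key is to choose the filler so that $\Omega(k)$ small jobs must move \emph{per large job displaced}, giving $\Omega(k \cdot n/k) \cdot \Theta(1) \ldots$ --- no: more directly, arrange that each toggle forces $\Omega(n)$ total job moves (e.g., $\Theta(n/k)$ large ones, which already need the $\Theta(n)$ small fillers rearranged around them), so $\Theta(n)$ toggles give $\Omega(n^2 / k) \cdot k = \Omega(kn)$ only if the per-toggle cost is $\Omega(kn/n) = \Omega(k)$; so the cleanest route is: make each toggle cost $\Omega(k n)/\Theta(n) = \Omega(k)$ per toggle is too weak. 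The right target is $\Omega(kn)$ \emph{aggregate} over $\Theta(n)$ requests, i.e.\ $\Omega(k)$ \emph{amortized per request} --- so it suffices to find a single ``super-toggle'' that uses $\Theta(k)$ requests and forces $\Theta(k^2)$ reallocations, then repeat it $\Theta(n/k)$ times. Concretely: take a gadget of $k$ size-$k$ jobs chained with unit overlaps and $\Theta(k^2)$ size-$1$ fillers packed tightly into the $\Theta(k^2)$ slots they span; flipping the gadget between its early and late packing (via one toggled unit-job) moves all $k$ large jobs \emph{and} all $\Theta(k^2)$ fillers that sit between them, because the fillers must slide past the moving large jobs. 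Toggle this gadget $\Theta(k)$ times inside one ``super-toggle''; that is $\Theta(k)$ requests and $\Theta(k \cdot k^2) = \Theta(k^3)$ cost --- too much, so instead just toggle once per super-toggle after building it, $\Theta(n/k)$ independent gadgets in sequence, $\Theta(n/k)$ build-and-flip super-toggles each of $\Theta(k)$ requests and $\Theta(k^2)$ cost, total $\Theta(n)$ requests and $\Theta(kn)$ cost.

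\textbf{Underallocation check.} In each gadget the $k$ large jobs and $\Theta(k^2)$ small jobs collectively occupy a region of $\Theta(k^2)$ slots, and their total size is $k\cdot k + \Theta(k^2) = \Theta(k^2)$; to keep it $\gamma$-underallocated one widens every window by a constant factor, which preserves the ``one free unit at the end'' forcing structure as long as the slack is only the single contested unit while everything else is tight. This is compatible with any fixed constant $\gamma$ by taking the constants in the $\Theta(\cdot)$'s appropriately.

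\textbf{Main obstacle.} The crux is the \emph{forcing argument}: showing that \emph{every} feasible schedule (not just EDF's) is driven to the global early/late configuration by the single toggled size-$1$ job, so that the adversary cannot be evaded by clever repacking. This is where one must exploit that the gadget is packed to within a single unit of full capacity in its span, so a pigeonhole/volume argument pins down the placement of each job up to the one-unit translation, exactly as in \lemref{why-underallocation} but with the bookkeeping complicated by the two job sizes and the need to argue that a size-$1$ job's displacement propagates through the size-$k$ chain. Getting the constants to simultaneously (i) make the forcing airtight and (ii) meet $\gamma$-underallocation for an \emph{arbitrary} constant $\gamma$ is the delicate part; everything else is the same counting as in \lemref{why-underallocation}.
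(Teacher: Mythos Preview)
Your proposal is far more complicated than what the paper does, and it has real gaps you did not close. The paper's construction uses only \emph{one} size-$k$ job together with $k$ size-$1$ jobs. The $k$ unit jobs all share the wide window $[0,2\gamma k]$; the single large job has a tight window of span exactly $k$, and it is repeatedly deleted and reinserted with its window at $[0,k]$, then $[k,2k]$, then $[2k,3k]$, and so on across the whole region. Because the large job's window equals its size, its placement is forced at each step; as it sweeps through all $2\gamma$ positions, every slot in $[0,2\gamma k]$ is occupied by it at some moment, so each of the $k$ unit jobs must vacate at least once per sweep. That is $\Theta(\gamma)=\Theta(1)$ requests per sweep producing $\Omega(k)$ reallocations; repeating the sweep $n$ times gives $\Theta(n)$ requests and $\Omega(kn)$ cost. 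There is no chain of large jobs and hence no delicate global forcing argument --- the only ``forcing'' is the trivial one that a size-$k$ job in a span-$k$ window has a unique placement.

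Your construction, by contrast, has two unresolved problems. First, building each gadget with $\Theta(k^2)$ fillers already costs $\Theta(k^2)$ insertion requests, not $\Theta(k)$, so the request budget of $\Theta(n)$ is blown before any toggling happens. Second, and more seriously, the claim that a single one-unit flip of the large chain forces all $\Theta(k^2)$ fillers to move is unjustified and in fact false under your own underallocation adjustment: a one-unit translation of the $k$ large jobs changes the occupancy of only the $k$ boundary slots, so at most $O(k)$ fillers are directly displaced, and once the fillers' windows have been widened by a constant factor for $\gamma$-underallocation nothing pins the remaining $\Theta(k^2)$ of them. The tension you correctly identify in your ``main obstacle'' paragraph --- tight enough to force, loose enough to be underallocated --- is precisely what kills the chain-of-large-jobs approach. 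The paper sidesteps it by putting the tightness on the \emph{single} large job (whose placement is then trivially forced) and the slack on the \emph{many} small jobs (whose wide common window supplies the underallocation).
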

\begin{proof}
  Consider a schedule of length $m = 2 \gamma k$.  Assume there are
  $k$ unit-sized jobs that are each scheduled with a window beginning
  at $0$ and ending at $m$.  In addition, consider a single large job
  $p$ that has size $k$ and a window of span exactly $k$.  

  Initially, all $k$ unit-size jobs are scheduled and they remain in
  the system throughout.  The large job $p$ is initially scheduled at
  time slot 0.  It is then deleted from time slot 0 and re-inserted at
  time slot $k$, and then again at time slot $2k, 3k, \ldots, m-k$.
  The same sequence of $2\gamma$ insertions and deletions is then
  repeated $n$ times.

  During a single sequence of $2\gamma$ insertions and deletions, each
  of the $k$ unit-sized jobs has to be rescheduled at least once,
  resulting in $\Omega(kn)$ reallocation cost.
\end{proof}

Does there exist a reallocation scheduler that handles jobs 
whose sizes are integers less than or equal to $k$ and 
matching the bounds in \obsref{size-k-jobs}?
There could be applications where jobs are not unit size, but 
where $k$ is relatively small. 

What happens if other types of reallocations are allowed, such as 
if new machines can be added or dropped from the schedule, 
or if machine speeds can change?

In this paper, $\gamma$ is very large, 
and the  paper does not attempt to optimize this constant, 
preferring clarity of exposition.
How much can this constant be improved? Is there a reallocation scheduler 
where $\gamma=1+\varepsilon$?

Finally, 
what other scheduling and optimization problems lend themselves to 
study in the context of reallocation?

\bibliographystyle{abbrv}
\bibliography{lit,lit-more}

\end{document}